\newtheorem{theorem}{Theorem}
\newtheorem{remark}[theorem]{Remark}
\theoremstyle{definition}
\newtheorem{definition}{Definition}[section]
\begin{document}
	%
	\title{A Role-Based Encryption Scheme for Securing Outsourced Cloud Data in a Multi-Organization Context}
	
	\author{Nazatul~H.~Sultan, ~\IEEEmembership{Student~Member,~IEEE}, Vijay Varadharajan, ~\IEEEmembership{Senior~Member,~IEEE}, Lan Zhou and Ferdous~A.~Barbhuiya, ~\IEEEmembership{Member,~IEEE}, 
		\thanks{Nazatul H. Sultan and Ferdous A. Barbhuiya are with the Department
			of Computer Science and Engineering, Indian Institute of Information Technology Guwahati (IIITG), Assam, India.\protect\\ 
			E-mail: [nazatul, ferdous]@iiitg.ac.in}\\
		\thanks{ Vijay Varadharajan is with the Department of Computing, Faculty of Engineering and Built Environment, Global Innovation Chair Professor at The University of Newcastle, Callaghan, Australia. \protect\\  
			E-mail: Vijay.Varadharajan@newcastle.edu.au}
		
		\thanks{Lan Zhou is with the Advanced Cyber Security Engineering Research Centre (ACSRC), The University of Newcastle, Callaghan, Australia and with the Amazon, Seattle, USA \protect\\  
			E-mail: lanz.cn@gmail.com }
	}

	%
\IEEEtitleabstractindextext{
\begin{abstract}
Role-Based Access Control (RBAC) is a popular model which maps roles to access permissions for resources and then roles to the users to provide access control. Role-Based Encryption (RBE) is a cryptographic form of RBAC model that integrates traditional RBAC with the cryptographic encryption method, where RBAC access policies are embedded in encrypted data itself so that any user holding a qualified role can access the data by decrypting it. However, the existing RBE schemes have been focusing on the single-organization cloud storage system, where the stored data can be accessed by users of the same organization. This paper presents a novel RBE scheme with efficient user revocation for the multi-organization cloud storage system, where the data from multiple independent organizations are stored and can be accessed by the authorized users from any other organization. Additionally, an outsourced decryption mechanism is introduced which enables the users to delegate expensive cryptographic operations to the cloud, thereby reducing the overhead on the end-users. Security and performance analyses of the proposed scheme demonstrate that it is provably secure against Chosen Plaintext Attack and can be useful for practical applications due to its low computation overhead.
	\end{abstract}
	
	\begin{IEEEkeywords}
		Role-based encryption, Role-based access control, Data outsourcing, Provably secure, User revocation.
	\end{IEEEkeywords}}

	\maketitle 
	\IEEEdisplaynontitleabstractindextext
	\IEEEpeerreviewmaketitle
	\IEEEraisesectionheading{\section{Introduction}
	\label{intro}}
	\IEEEPARstart{P}{ublic} cloud storage have already become popular with end-users such as individuals and organizations. Some popular public cloud storage includes Microsoft Azure Storage Service \cite{Azure}, Amazon S3 \cite{AmazonS3}, and Google Cloud Storage \cite{GoogleCloudStorage}. The public cloud storage provides the data owners\footnote{Data owners are the end-users who own data.} to reduce their investment costs like building their own storage infrastructures. It also enables ubiquitous data access through the Internet without worrying about management and maintenance of the outsourced data \cite{Armbrust2010}.
	\par 
	Although the benefits of public cloud storage are significant, there is some reluctance among the data owners to outsource their data in the storage servers of the public cloud due to their concern for data security and privacy \cite{Kamara2010, Shin2017}. As the public cloud is formed by one or more cloud storage servers which are often distributed geographically in different locations, data owners do not know for certain where their data are stored. There is a strong perception that end-users lose control over their data once it is uploaded to the cloud storage \cite{Zhou2013}. There is more reluctance, especially if the data is sensitive. In \cite{McAfee}, McAfee reported that $84\%$ of the outsourced data are sensitive or confidential such as health-related data, financial documents, and personal photographs. This concern arises as the service provider might have the ability to access the sensitive information of the data owners for various motivations. For example, if the outsourced data contain data owners' health records, the service provider has the potential to sell these data to health insurance agencies for financial gain \cite{Verizon}. Hence, it is critical to preserve the confidentiality of outsourced data so that no malicious entity, including the service provider, has the ability to access the data without proper authorization. In order to control access of the data stored in a public cloud, suitable access control policies and mechanisms are required. The access policies must restrict data access to anyone other than those intended by the data owners.

	\par 
	In large scale systems such as enterprise, Role-Based Access Control (RBAC) has been widely used for access control, as it provides flexible security management. For instance, it allows access control to be managed at a level that corresponds closely to the organisation's policy and structure. Roles in organizations are mapped to access permissions and users are mapped to appropriate roles. In general, users are assigned membership to the roles based on their responsibilities and qualification in the organisation, and permissions are assigned to qualified roles instead of individual users. Moreover, in RBAC, a role can inherit permissions from other roles, therefore there can be a hierarchical structure of roles. This is one of the major advantages of the RBAC system. However, in the traditional RBAC \cite{Ferraiolo1999, Ferraiolo1992}, a service provider defines and enforces access policies on behalf of the data owners. The data owners have to, therefore, assume that the service provider is trusted to prevent unauthorised users from accessing their data. However, in an untrusted environment like a cloud environment, the service provider may give access privileges to unauthorized users for its own benefit, which can lead to reduced level of trust on the part of the data owners on the service provider when it comes to defining and enforcing access policy. This, in turn, makes the traditional RBAC mechanism less suitable for public cloud storage system. 
	\par
     Role-Based Encryption (RBE) is a cryptographic data access control method which is designed by combining the traditional RBAC model with the encryption method. It enables data owners to define and enforce RBAC access policy on the encrypted data itself \cite{Zhou2011, Zhou2013}. It thus reduces dependency on the service provider for defining and enforcement of RBAC access policy while giving full control to the data owners. In RBE, roles are organized to form a hierarchy, and each role is associated with a group of users who possess that role. Data are encrypted in such a way that any user who holds the required roles can derive the appropriate keys to decrypt the data. As the RBE is based on the RBAC model, the inheritance properties of RBE makes it more suitable for large organizations with complex hierarchical structures. 
	
	\par 
	There have been a few RBE schemes proposed over the recent years such as \cite{Zhou2011, Zhou2013, Zhu2013, Perez2017}. All of the schemes are designed for a single-organization public cloud storage system, where the outsourced data are hosted in the public cloud by the data owners of an organization, and the data are accessible to the users of the same organization only. In such a case, a single authority maintains all the roles and role hierarchies of that organization. The authority is responsible for assigning roles and corresponding role-keys to the users according to their access privileges in the role hierarchy. Later, the users can access data using their assigned roles, and role-keys will be used to decrypt the appropriate encrypted data. As a result, a user must hold a role in the organization to access data stored by that organization. 
	\par
	
	In many practical scenarios such as consortium\footnote{Consortium is an association where multiple organizations come together to share data to achieve some common goals.}, the data owners from multiple organizations outsource their data to the public cloud, and the data are shared with the users of other organizations. It might also happen that two organizations wish to work together on a collaborative project requiring users from these two organizations to share data in a secure manner. This is a more challenging multi-organization public cloud storage scenario. In such scenarios, the existing schemes \cite{Zhou2013, Zhu2013, Perez2017} fail as an authority in one organization will not be able to establish roles for the users in the other organization(s) for the lack of trust among the organizations. One approach to solving this problem is to define a fresh RBE system for the consortium creating separate roles and role hierarchies for the users of the consortium from different organizations. However, this creates a practical challenge as it is difficult to define the authority who can manage this consortium role system, and to which organization should this authority report to.

	\par 
	
    In this paper, we have developed a novel multi-organization RBE scheme, which addresses the above challenge of sharing outsourced encrypted data between multiple organizations.
    Our multi-organization RBE scheme will achieve the sharing of encrypted data across multiple organizations in such a way that only users with appropriate roles belonging to different organizations are able to decrypt the data. Hence, the proposed scheme is suitable for secure data sharing among several organizations such as in a consortium or in collaborative projects, where the ability to decrypt and access the data from different organizations is only possible if the appropriate policies specified by those organizations are satisfied. This has been achieved without creating a fresh consortium RBE system as mentioned earlier.
    
    The realization of the proposed multi-organization RBE scheme has been achieved using a hybrid cloud architecture comprising a private cloud and public cloud{\footnote {In a recent reference \cite{McAfee}, McAfee reported that 59\% of the organizations are using the hybrid cloud model, and its popularity is growing at an increasing rate among the organizations.}}. The private cloud is used to store only the sensitive information such as user and organization related secrets, and the public cloud is used to store the actual data in encrypted form. Users who wish to share or access the data only interact with the public cloud; there is no access for public users to access the private cloud, which greatly reduces the attack surface for the private cloud. This architecture not only dispels the organisation’s concerns about risks of leaking sensitive organization's information, but also takes full advantage of public cloud’s power to securely store large volume of data. 
    
    The noteworthy contributions of the scheme proposed in this paper are as follows:

	\begin{enumerate}
		\item [i)] A single-organization role-based encryption scheme referred to as SO-RBE is proposed. SO-RBE achieves access control over encrypted data in a single-organization cloud storage system, where the outsourced (encrypted) data are hosted by an organization. The encrypted data can be decrypted and accessed by the users of same organization who satisfy appropriate policies.

		\item [ii)] A separate multi-organization role-based encryption scheme referred to as MO-RBE is proposed by extending SO-RBE. MO-RBE enables a user of one organization to access data from another {\em without possessing any role from that organization}, provided the user is authorized. That is, the existing roles of the user from his/her own organization are alone sufficient to access data of the other organizations if the user is authorized (satisfies the appropriate access policies needed to decrypt the data).	
		
		\item [iii)] An efficient user revocation mechanism is introduced, which will work on both SO-RBE and MO-RBE, for revoking users from the system without the need to perform computationally expensive operations.
		\item [iv)] An outsourced decryption method is also introduced for outsourcing of the computation intensive cryptographic operations to the cloud without disclosing confidential information to the service provider. 	
		\item [v)] A formal security analysis of the proposed scheme is provided which shows that the proposed scheme is provably secure against \emph{Chosen Plaintext Attack} (CPA) under the standard cryptographic assumptions. 	
		\item [vi)] Moreover, performance analysis of the proposed scheme demonstrates that the proposed scheme is efficient to be used in practical applications.

	\end{enumerate}
	\par 
	To the best of our knowledge, this is the first RBE scheme that addresses access control over encrypted data for public cloud storage system in a multi-organization context.
	\par
	Unless stated otherwise, we refer the public cloud storage system as the cloud in the rest of this paper.
	\par 
	The rest of this paper is organized as follows. Related works are presented in Section \ref{related_works}. Section \ref{preliminaries} gives a brief overview of role hierarchy, properties of the bilinear map, and complexity assumptions that will be used throughout this paper. The system model, framework, security assumptions and security model are presented in Section \ref{proposed_model}. Section \ref{proposed_scheme} describes the proposed scheme in detail. The security and performance analysis details are given in Section \ref{analysis} and finally, Section \ref{conclusion} concludes this paper.
	
	\section{Related Works}
	\label{related_works}
	
\subsection{Hierarchical Key Management (HKM) Schemes} \label{HKA}
	Access control using hierarchical key management (HKM) method has been studied in the early 1980s. In \cite{Akl1983}, Akl \emph{et al.} presented the first cryptographic hierarchical access control technique to solve the hierarchical multi-level security problem. In this scheme, the users are grouped into disjoint sets (or classes) and form a hierarchical structure of classes. Each class is assigned with a unique encryption key and a public parameter in such a way that a higher-level class can derive encryption keys of any lower-level classes using its encryption key and some public parameters. Later several other HKM schemes have been proposed using different techniques, e.g. \cite{Chang2004, Atallah2009, Lin2011, Tang2016b, Chen2017, Pareek2018}. Recently, in \cite{Tang2016b}, Tang \emph{et al.} presented a HKM scheme based on linear geometry to provide flexible and fine-grained access control in cloud storage systems. In this scheme, any class in the hierarchy can derive encryption key using inner product of its public vector with the private vector of its ancestor class. In \cite{Chen2017}, Chen \emph{et al.} proposed another HKM scheme to support the assignment of dynamic reading and writing privileges. However, the main drawback of the HKM schemes is the high complexity for setting up the encryption keys for a large set of users \cite{Zhou2013}. Also, user revocation is a challenging task, as all the encryption keys that are known to the revoked users, and their related public parameters need to change per user revocation which may incur a high overhead on the system. 

	\subsection{Hierarchical ID based Encryption Scheme (HIBE)}
	An alternative approach for the management of keys is Hierarchical ID-based Encryption (HIBE) such as \cite{Gentry2002, Boneh2005}. In these schemes, a user can decrypt an encrypted data using the private key associated with his/her identity if and only if the data is encrypted using any descendant identity of the hierarchy (i.e., tree). That is, the user cannot access any data which are encrypted using the ancestor identities or any other identity of the hierarchy. As such, the HIBE schemes can enforce RBAC access policy in encrypted data by associating the users with leaf nodes and roles with non-leaf nodes in the hierarchy. However, in HIBE schemes, the size of an identity increases with depth of the hierarchy. In addition, the identity of a node must be a subset of its ancestor node so that its ancestor node can derive this node’s private key for decryption. Therefore, this node cannot be assigned as a descendant node of another node in the hierarchy tree unless the identity of the other role is also the super set of this node’s identity.
	
	\subsection{Attribute based Encryption (ABE) Schemes}
	
	The first attribute-based encryption (ABE) scheme was proposed in \cite{Goyal2006} based on the work in \cite{Sahai2005}, and some other ABE schemes have been proposed afterwards. In these schemes, data is encrypted using a set of attributes, and users who have the private keys associated with these attributes can decrypt the data. These works have provided an alternative approach to secure the data stored in a distributed environment using a different access control mechanism, such as \cite{Yub}.  In \cite{Zhou2011}, Zhou et al. have shown that an ABE scheme can be used to enforce RBAC policies. However, in that approach, the size of user key is not constant, and the revocation of a user will result in a key update of all the other users of the same role.  \cite{Zhu2013} also investigated the solutions of using ABE scheme in RBAC model. However their solution only maps the attributes to the role level in RBAC, and they assumed that the RBAC system itself would determine the user membership. There have also been other works based on variants of ABE such as \cite{Bethencourt2007, Goyal2006, Chase2007, Yang2014, Chuangui2018, FUGKEAW2018}. However, all these works  \cite{Bethencourt2007, Goyal2006, Chase2007, Yang2014, Chuangui2018, FUGKEAW2018} cannot address role hierarchy and inheritance properties. 
	
	\subsection{Role based Encryption (RBE)}
	
	A role based encryption (RBE) scheme integrates the RBAC access control model with cryptographic encryption techniques to enforce RBAC access policies on encrypted data. This enables the data to be encrypted to specific roles. Any user who holds the required role(s) and satisfies the associated RBAC access policies can access the data by decrypting it. It thus provides a secure solution for outsourcing data to the cloud storage servers while giving access to authorized users by defining and enforcing role based access policy on the encrypted data itself. 
	\par 
	In \cite{Zhou2011}, Zhou \emph{et al.} proposed the first RBE scheme for data sharing in an untrusted hybrid cloud environment. In this scheme, the ciphertexts and secret keys of the users are of constant size. This scheme also offers efficient user revocation capability. A complete design and implementation of the proposed scheme with a real world application has been described in \cite{Zhou2013}. In \cite{Zhu2013}, Zhu \emph{et al.} proposed another RBE scheme that increases ciphertext size linearly with the number of roles. In the scheme \cite{Zhu2013}, user revocation is performed during encryption process by the data owner that embeds revoke user information on the encrypted data itself. Hence data owner must know all the revoked users' information prior to the data encryption phase. In \cite{Perez2017}, Perez \emph{et al.} proposed a data-centric access control mechanism for cloud storage systems using the concept of proxy re-encryption technique. In \cite{Perez2017}, the data owner outsources data before encrypting it using the identity-based encryption technique proposed in \cite{Wang2016}. To share data with the authorized users, the data owner generates re-encryption keys based on the RBAC access policies and keeps the re-encryption keys with the ciphertexts. When an authorized user accesses the ciphertext, the service provider re-encrypts the ciphertext using the re-encryption keys based on a RBAC access policy. However, none of these schemes \cite{Zhou2011, Zhou2013, Zhu2013, Perez2017} support data access control in multi-organization cloud storage systems.
\par
	
	In this paper, we have used the RBE scheme described in \cite{Zhou2011, Zhou2013} as the basis for developing our RBE mechanism for multi-organization cloud environment, which has significant advantages compared to the ABE. It has the natural ability to enforce RBAC policies on encrypted data stored in the cloud with an efficient user revocation. A distinct advantage is that RBE is able to deal with role hierarchies, whereby roles inherit permissions from other roles. This is particularly significant as role based systems has been widely used in many large scale commercial systems providing flexible access control management corresponding closely to the organisation’s policy and structure. 
	\par
	Similarly in a multi-authority context, a multi-organization scheme based on ABE is not able to deal with the hierarchical structure of the organization in an efficient manner. On the other hand, a multi-organization scheme based on RBE such as the one proposed in this paper naturally fits with the organizational structure and is able to capture effectively the role hierarchies and inheritance properties. Furthermore, several of the multi-authority ABE schemes such as \cite{Chase2007, Yang2014, Yang2013} and \cite{Xue2017} require a trusted centralized/global authority to manage the set of attribute authorities. In this sense, these schemes represent a single-organization cloud environment. However, there are a few multi-authority ABE schemes such as \cite{WANG2011, Wan2012, Wei2019} can provide role hierarchy and inheritance property. These schemes use a hierarchical key generation method, where one top-level domain authority can generate keys for the lower-level domain authorities or users. But, \cite{WANG2011, Wan2012, Wei2019} assume that same root domain authority manages all the attributes of the system \cite{Perez2017}. As such, these schemes also represent a single-organization cloud environment. 
	
	

	\par 
		
		

	\begin{figure}[b]
		\centering
		\scalebox{3.5}{\includegraphics[width=2.5cm, height=1.8cm]{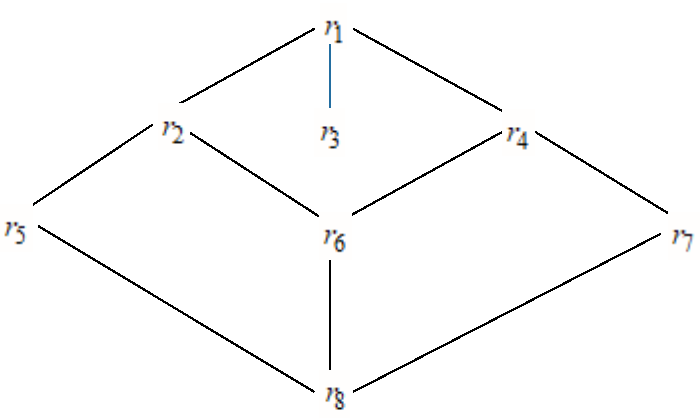}}
		\caption{Sample Role Hierarchy}
		\label{hierarchy}
	\end{figure}

\begin{figure*}[h]
	\centering
	\includegraphics[scale=.88]{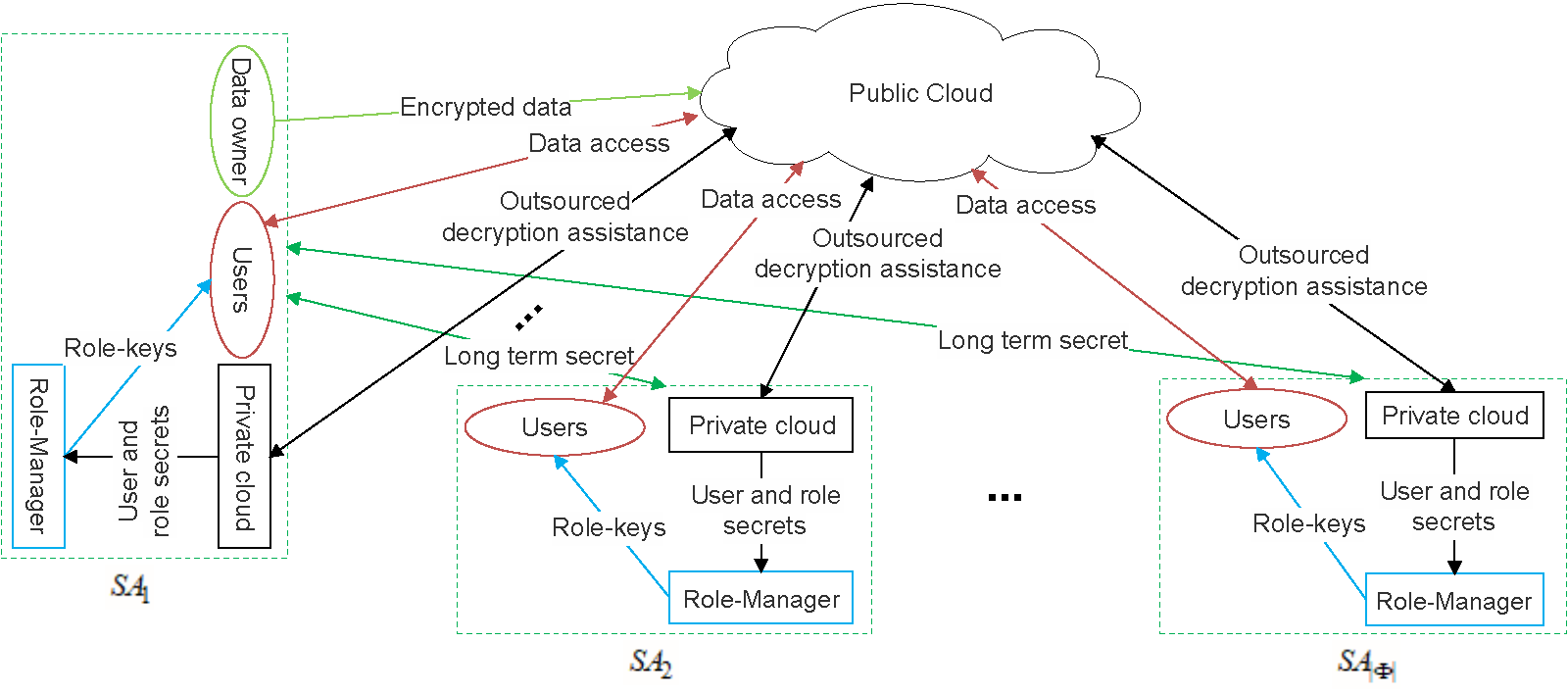}
	\caption{Proposed System Model}
	\label{model}
\end{figure*}


	\section{Preliminaries}
	\label{preliminaries}
	This section introduces the concept of role hierarchy, properties of bilinear mapping as well as standard complexity assumptions.

	\subsection{Role Hierarchy}
	\label{role_hierarchy}
	In the proposed scheme roles are organized into a hierarchy where ancestor roles can inherit access privileges of its descendant roles. Figure \ref{hierarchy} shows a sample role hierarchy. The following notations are used to define a role hierarchy:
	\begin{itemize}
		\item $\Psi$: set of all roles in the role hierarchy. For example, $\Psi= \{r_1, r_2, r_3, r_4, r_5, r_6, r_7, r_8\}$
		\item $\mathbb{A}_{r_i}$: ancestor set of the role $r_i$. For example, $\mathbb{A}_{r_8}= \{r_1, r_2, r_4, r_5, r_6, r_7, r_8\}, \mathbb{A}_{r_5}=\{r_1, r_2, r_5\}$ and $\mathbb{A}_{r_6}=\{r_1, r_2, r_4, r_6\}$.
		\item $\mathbb{D}_{r_i}$: descendant set of the role $r_i$. For example, $\mathbb{D}_{r_5}= \{r_8\}$, $\mathbb{D}_{r_1}= \{r_2, r_3, r_4, r_5, r_6, r_7, r_8\}$, $\mathbb{D}_{r_3}= \emptyset$ and $\mathbb{D}_{r_4}= \{r_6, r_7, r_8\}$ .
		\item $\bar{\mathbb{A}}_{r_i}$: set of roles not in $\mathbb{A}_{r_i}$, i.e., $\left(\Psi\setminus\mathbb{A}_{r_i}\right)$. For example, $\bar{\mathbb{A}}_{r_8}= \{r_3\}$, $\bar{\mathbb{A}}_{r_5}= \{r_3, r_4, r_6, r_7, r_8\}$ and $\bar{\mathbb{A}}_{r_6}= \{r_3, r_5, r_7, r_8\}$.
		\item $\bar{\mathbb{D}}_{r_i}$: set of roles not in $\mathbb{D}_{r_i}$, i.e., $\left(\Psi\setminus \mathbb{D}_{r_i}\right)$. For example, $\bar{\mathbb{D}}_{r_5}= \{r_1, r_2, r_3, r_4, r_5, r_6, r_7\}$, $\bar{\mathbb{D}}_{r_1}= \{r_1\}$ and $\bar{\mathbb{D}}_{r_4}= \{r_1, r_2, r_3, r_4, r_5\}$.
	\end{itemize}
	\subsection{Bilinear Map}
	Let $\mathbb{G}_1$ and $\mathbb{G}_T$ be two cyclic multiplicative groups. Let, $g$ be a generator of $\mathbb{G}_1$. The bilinear map $\hat{e}: \mathbb{G}_1\times \mathbb{G}_1\rightarrow \mathbb{G}_T$ has the following properties:
	\begin{itemize}
		\item \textit{Bilinear}: $\hat{e}(g^a, g^b)= \hat{e}(g, g)^{ab}$ $\forall g$ and $\forall a, b\in \mathbb{Z}_q^*$
		\item \textit{Non-degenerate}: $\hat{e}(g, g)\neq{1}$
		\item \textit{Computable}: There exists an efficiently computable algorithm for computing $\hat{e}(g, g)$ for all $g\in \mathbb{G}_1$
	\end{itemize}
	
	\subsection{Complexity Assumptions}
	\label{assumption}
	\begin{enumerate}
		\item \textbf{Decisional Bilinear Diffie-Hellman (DBDH) Assumption}: Let $\hat{e}: \mathbb{G}_1\times \mathbb{G}_1\rightarrow \mathbb{G}_T$ be an efficiently computable bilinear map. The DBDH assumption states that no probabilistic polynomial-time algorithm is able to distinguish the tuples $\left<g^a, g^b, g^c, \hat{e}(g, g)^{abc}\right>$ and $\left<g^a, g^b, g^c, \hat{e}(g, g)^{z}\right>$ with non-negligible advantage, where $a, b, c, z\in \mathbb{Z}_q^*$ and $g\in \mathbb{G}_1$.
		\item \textbf{Decisional Modified Bilinear Diffie-Hellman (MDBDH) Assumption} \cite{Sahai2005}: Let $\hat{e}: \mathbb{G}_1\times \mathbb{G}_1\rightarrow \mathbb{G}_T$ be an efficiently computable bilinear map. The MDBDH assumption states that no probabilistic polynomial-time algorithm is able to distinguish the tuples $\left<g^a, g^b, g^c, \hat{e}(g, g)^{\frac{ab}{c}}\right>$ and $\left<g^a, g^b, g^c, \hat{e}(g, g)^{z}\right>$ with non-negligible advantage, where $a, b, c, z\in \mathbb{Z}_q^*$ and $g\in \mathbb{G}_1$.
	\end{enumerate}
	\begin{table}[h]
		\tabcolsep 1.0pt
		\centering
		\caption{NOTATIONS}
		\begin{tabular}{p{1.3cm}p{6.9cm}}
			\hline
			Notation  & Description
			\\[0.5ex]    \hline
			$q$ & a large prime number   \\\hline
			$\mathbb{G}_1, \mathbb{G}_T$ & two cyclic multiplicative groups of order $q$   \\\hline
			$H_1(.)$ & hash function $H_1: \{0, 1\}^w\rightarrow \mathbb{Z}_q^*$\\\hline 
			$\Phi$ & set of organization in the system\\\hline
			$|\Phi|$ & total number of organizations in the system\\\hline
			$\Psi_k$ & set of roles managed by $k^{th}$ organization\\\hline
			$\mathtt{SA_k}$& $k^{th}$ system administrator\\\hline
			$\mathtt{RM_{r^k_i}}$ & a role-manager of $k^{th}$ organization which manages role $r^k_i$\\\hline	
			$r^k_{i}$ & $i^{th}$ role which is managed by $k^{th}$ organization\\\hline
			${ID}^k_{u}$ & unique identity of the $u^{th}$ user registered with $k^{th}$ organization\\\hline
			$\mathtt{RK}^{k}_{i, u}$& role-key of the role $r^k_i$ issued to the user ${ID}^k_{u}$ \\\hline
		\end{tabular}
		\label{notation}
	\end{table}
	
	\section{Proposed Model}
	\label{proposed_model}
	In this section, system model of the proposed scheme is presented along with framework, security assumptions and security model.
	\subsection{System Model}
	\label{cloud_model}
	Figure \ref{model} shows the proposed system model. It consists of six entities, namely system administrator, role-manager, private cloud, public cloud, data owners, and users which are described next. The notations used in this paper are shown in Table \ref{notation}.	
	\begin{itemize}
	\item \textit{System Administrator} (SA): It is a certified authority of an organization. It is responsible for managing role hierarchy of an organization. It generates system parameters including master secret and system public parameter. It is responsible for issuing a pair of private and public keys for each registered users. System administrator also issues role secret for each role in the organization to the respective role-manager. It also manages private cloud of an organization. System administrator keeps a part of the master secret and user secrets in the private cloud. Further, it revokes users when required. Moreover, each system administrator shares its \emph{long term secret} with the other system administrators in the cloud system so that its users can access data from the other organizations. 
	
	\item \textit{Private Clouds}: It is formed by the internal cloud storage servers of an organization which is managed by the system administrator. The responsibility of the private cloud is to keep confidential information of the organization. Private cloud mainly stores a part of the master secrets of the system administrator and user secrets. It uses the known master secrets while assisting the public cloud during outsourced decryption process, and it uses user secrets while assisting the role-manager for generating role-keys in the key generation process. Private cloud provides interfaces to the public cloud and to the role-managers only.
	
	\item \textit{Role-Manager}: It is an entity which is responsible for managing roles. Each role-manager in an organization manages each of its corresponding role(s) and the associated users in that role(s). The role-manager assigns roles to each registered users and issues role-keys related with the assigned roles for them. During the role-key generation process, the role-manager interacts with the private cloud to compute role-keys for the users. 

		
		\item \textit{Public Cloud}: It is formed by one or more cloud storage servers which are managed by a third-party service provider known as cloud service provider. The main responsibility of the public cloud is to store data owners' outsourced (encrypted) data in its cloud storage servers. The other responsibility is to perform outsourced decryption. 
		\item  \textit{Data Owner}: It is an entity who owns the data. A data owner encrypts data using RBAC access policy before outsourcing to the public cloud. 
		
		\item \textit{User}: It is an entity who uses the outsourced data. Each user needs to register with a system administrator. For each registered user, the role managers assign roles in form of the role-keys based on their profiles and responsibilities. The registered users also receive private keys from the system administrator.
	\end{itemize}
	
	\subsection{Framework}  
	\label{frameworks}
	The proposed scheme consists of the following phases and algorithms. 	
	\begin{enumerate}
		\item \textit{System Initialization}: This phase initializes the system, and it is initiated by a system administrator. It consists of the following algorithm.
		\begin{itemize}
			\item \textsc{Init} $((g^{\delta_k}, \mathtt{PK_{SA_k}}, \mathtt{MSK_{SA_k}})\leftarrow 1^{\Lambda_k})$: It takes security parameter $\Lambda_k$ as input. It outputs  master secret $\mathtt{MSK_{SA_k}}$, public parameter $\mathtt{PK_{SA_k}}$ and another secret $g^{\delta_k}$.
		\end{itemize}
		\item \textit{Manage Role}: This phase is initiated by a system administrator to generate secret role parameter, role public keys and role secrets. It comprises the following algorithm.
		\begin{itemize}
			\item \textsc{RoleParaGen} $((\mathtt{RP_{SA_k}}, \{\mathbb{PK}^k_{i}\}_{r^k_{ i}\in \Psi_k},\\ \mathtt{(RS_{r^k_i})_{\forall r^k_{ i}\in \Psi_k}})\leftarrow (\mathtt{PK_{SA_k}}, \mathcal{H}))$: It takes system public parameter $\mathtt{PK_{SA_k}}$ and a role key hierarchy $\mathcal{H}$ as input. It outputs secret role parameter $\mathtt{RP_{SA_k}}$ and role public key $\mathbb{PK}^k_{ i}$, role secret $\mathtt{RS_{r^k_i}}$ of each role $r^k_{ i}\in \Psi_k$.
		\end{itemize}
		\item \emph{Key Generation}: In this phase, system administrator issues a pair of private, public keys, and user secrets for each registered users. In this phase, the role-manager also issues role-keys to each registered user according to the roles they hold. It comprises the following two algorithms.
		\begin{itemize}
			\item \textsc{PrivKeyGen} $((\mathtt{US_{ID^k_u}}, u^k_u, \mathtt{Pub_{ID^k_{u}}})\leftarrow (ID^k_{u}, \mathtt{MSK_{SA_k}}, \mathtt{PK_{SA_k}}))$: It takes system public parameter $\mathtt{PK_{SA_k}}$, master secret $\mathtt{MSK_{SA_k}}$ and unique identity $ID^k_{u}$ of a user as input. It outputs private key $u^k_u$, public key $\mathtt{Pub_{ID^k_{u}}}$ and user secret $\mathtt{US_{ID^k_u}}$.	 
			\item \textsc{RoleKeyGen} $(\mathtt{RK^k_{{m, u}}}\leftarrow \left(\mathtt{PK_{SA_k}}, \mathtt{RS_{r^k_m}}, g^{\delta_k}, \\\mathtt{US_{ID^k_u}}, r^k_{ m}\right))$: It takes system public parameter $\mathtt{PK_{SA_k}}$, role secret $\mathtt{RS_{r^k_m}}$, $g^{\delta_k}$, user secret $\mathtt{US_{ID^k_u}}$ and role $r^k_{ m}$ as input. It outputs a role-key $\mathtt{RK^k_{{m, u}}}$ of the role $r^k_{ m}$ for the user $ID_u^k$.
		\end{itemize}
		\item \emph{Encryption}: This phase is initiated by the data owners to encrypt data using role public keys according to RBAC access policy. It consists of \textsc{Enc} algorithm.
		\begin{itemize}
			\item \textsc{Enc} $(\mathbb{CT}\leftarrow (\mathtt{PK_{SA_k}, M,} \mathbb{PK}^k_{ i}))$: It takes system public parameter $\mathtt{PK_{SA_k}}$, a message $\mathtt{M}\in \mathcal{M}$ where $\mathcal{M}$ is the message space, and role public key $\mathbb{PK}^k_{ i}$ of a role $r^k_{ i}$ as input and outputs a ciphertext $\mathbb{CT}$. 
		\end{itemize}
		\item \emph{Decryption}: In this phase, a user of an organization accesses encrypted data hosted by the same organization by decrypting it using his/her role-keys and private key. It comprises \textsc{Dec} algorithm. 
		\begin{itemize}
			\item \textsc{Dec} $(\mathtt{M}\leftarrow (\mathbb{CT}, \mathtt{RK^k_{x, u}, u^k_u}))$: It takes role-key $\mathtt{RK^k_{x, u}}$, private key $u^k_u$ of a user $ID^k_{u}$ and ciphertext $\mathbb{CT}$ as input and outputs a message $\mathtt{M}$ if and only if $r^k_{ x}\in \mathbb{A}_{r^k_{ i}}$. 
		\end{itemize}
		\item \emph{User Revocation}: In this phase, system administrator revokes users from the system. It consists of \textsc{URevoke} algorithm.
		\begin{itemize}
			\item \textsc{URevoke ($\perp\leftarrow\mathtt{Pub_{ID^k_u}}$)}: It takes public key $\mathtt{Pub_{ID^k_u}}$ of a revoked user $ID^k_u$ as input. After invalidating the public key $\mathtt{Pub_{ID^k_u}}$ of the user $ID^k_u$, it outputs $\perp$.
		\end{itemize}
		\item \emph{Role Public Key Update}: In this phase, joint role public key is generated when an organization wants to share data with some other organizations. This phase is initiated by the system administrator of an organization. It consists of \textsc{RolePubKeyUpdate} algorithm.
		\begin{itemize}
			\item \textsc{RolePubKeyUpdate} $\left(\mathbb{PK}^{(k, k')}_{(i|| j)}\leftarrow \left(\mathbb{PK}^k_{ i}, \mathbb{PK}^{k'}_{ j}, \mathtt{PK_{SA_{k'}}}\right)\right)$: It takes role public key $\mathbb{PK}^k_{ i}$ of the role $r^k_{ i}$ which is maintained by the system administrator $\mathtt{SA_k}$ of the $k^{th}$ organization, another role public key $\mathbb{PK}^{k'}_{ j}$ of the role $r^{k'}_{ j}$ which is maintained by the system administrator $\mathtt{SA_{k'}}$ of the $k'^{th}$ organization and system public parameter $\mathtt{PK_{SA_{k'}}}$ as input. It outputs a joint role public key $\mathbb{PK}^{(k, k')}_{(i||j)}$ of the roles $r^k_{ i}$ and $r^{k'}_{ j}$.
		\end{itemize}
		\item \emph{Multi-Organization Encryption}: This phase encrypts the data using the joint role public key according to a RBAC access policy. It comprises \textsc{MultiEnc} algorithm.
		\begin{itemize}
			\item \textsc{MultiEnc} $(\mathbb{CT}\leftarrow (\mathtt{PK_{SA_k}}, \mathbb{PK}^{(k, k')}_{(i||j)}, \mathtt{M}))$: It takes public parameter $\mathtt{PK_{SA_k}}$ of $\mathtt{SA_k}$, joint role public key $\mathbb{PK}^{(k, k')}_{(i||j)}$ and the data (or message) $\mathtt{M}$ as input. It outputs a ciphertext $\mathbb{CT}$. 
		\end{itemize}
		\item \emph{System Administrator Agreement}: In this phase, an organization (i.e., system administrator) shares a secret key, termed as \emph{long term secret}, with other organizations. It consists of \textsc{LongKeyShare} algorithm.
		\begin{itemize}
			\item \textsc{LongKeyShare} $((\mathtt{ReKey_{k, k'}}, g^{(y_{k'}- \delta_{k'})\sigma_{k'}})\leftarrow (\mathtt{MSK_{SA_{k'}}}, \mathtt{PK_{SA_{k'}}}, \mathtt{MSK_{SA_{k}}}))$: It takes master secret $\mathtt{MSK_{SA_{k'}}}$ of the $k^{th}$ organization, public parameter $\mathtt{PK_{SA_{k'}}}$ and another master secret $\mathtt{MSK_{SA_{k}}}$ of the $k'^{th}$ organization as input. It outputs a \emph{long term secret key} $g^{(y_{k'}- \delta_{k'})\sigma_{k'}}$ and a proxy re-encryption key $\mathtt{ReKey_{k, k'}}$.
		\end{itemize} 
		\item \emph{Multi-Organization Decryption}: In this phase, a user of an organization accesses the encrypted data from other organizations by decrypting it using his/her role-key and private key. It consists of \textsc{MultiDec} algorithm. 
		\begin{itemize}
			\item \textsc{MultiDec} $(\mathtt{M} \leftarrow (\mathbb{CT}, \mathtt{ReKey_{k, k'}}, \eta_k, \sigma_{k'}, \mathtt{RK^{k'}_{x, u}, u^{k'}_u}))$: It takes a ciphertext $\mathbb{CT}$, proxy re-encryption key $\mathtt{ReKey_{k, k'}}$, secrets $\eta_k$ and $\sigma_{k'}$, role-key $\mathtt{RK^{k'}_{x, u}}$ and private key $u_{k', u}$ as input. It outputs plaintext data $\mathtt{M}$ if and only if $r^{k'}_{x}\in \mathbb{A}_{r^{k'}_{j}}$.
		\end{itemize}
	\end{enumerate}
	\begin{figure*}[t]
		\centering
		\scalebox{4}{\includegraphics[width=4.2cm, height=1.8cm]{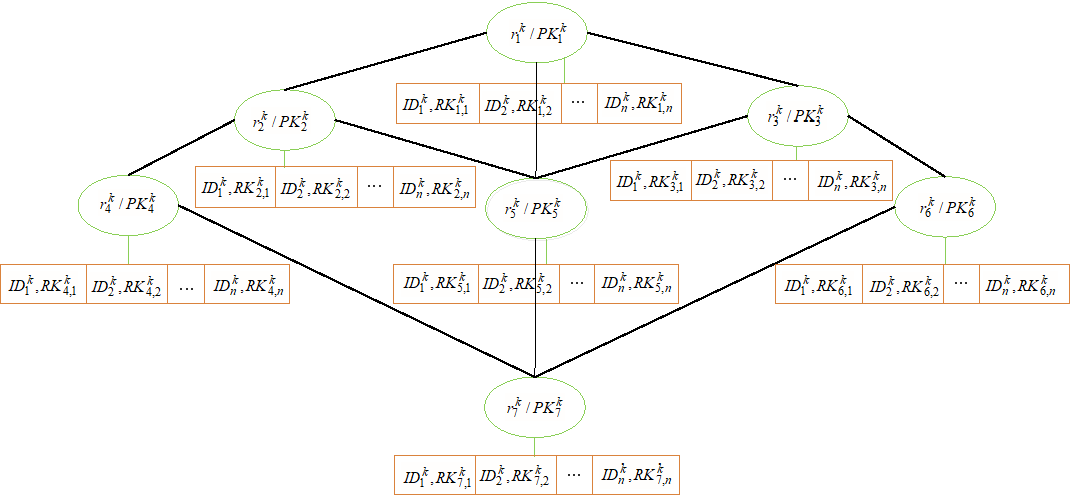}}
		\caption{Sample Role Key Hierarchy}
		\label{RKH}
	\end{figure*}
	\subsection{Security Assumptions}
	\label{RBE_assumption}
	In the proposed scheme, the following security assumptions are made.
	\begin{enumerate}
		\item [i)] Each system administrator and role-manager are fully trusted entities. 
		\item [ii)] Public and private clouds are honest-but-curious entities. They honestly perform assigned tasks, but they may try to gain all the possible knowledge about the outsourced data. 
		\item [iii)] Users are not honest. They may try to gain unauthorized access to the outsourced data by colluding with other users. 
		\item [iv)] As the public cloud is honest, we do not consider any active attacks from it by colluding with the revoked users as in \cite{Hur2011, Yu2010, Hur2013}, etc.
		\item [v)] All entities use secure channels while communicating with one another. The secure channels can be established using \emph{Secure Sockets Layer} (SSL).
		\item [vi)] There is an authentication mechanism which is used by role-managers and the public cloud to authenticate the users. 
	\end{enumerate}	
	
	\subsection{Security Model}
	\label{RBE_security_model}
	The security model of the proposed scheme is defined on \emph{Chosen Plaintext Attack} (CPA) security under \emph{selective ID Model}\footnote{In the \emph{Selective-ID security} model, the adversary must submit a challenged role and role hierarchy before the start of the security game. This is essential in our security proof to set up the role public key (please refer Section \ref{security_analsysis} for more details).} \cite{Sahai2005}. The CPA security can be illustrated using the following security game between a challenger $\mathcal{C}$ and an adversary $\mathcal{A}$. 
	
	\begin{itemize}
		\item {\bf \textsc{Initialization}} Adversary $\mathcal{A}$ sends arbitrary role hierarchies for each uncorrupted system authorities. It also sends two challenged roles $r^k_{i}$ and $r^{k'}_{i}$ of any two uncorrupted participating organizations/system authorities\footnote{Note that, the adversary can send more than two roles associated with any organization to the challenger. For simplicity, we consider two roles associated with two different organizations.} to the challenger $\mathcal{C}$.
		\item {\bf \textsc{Setup}} Challenger $\mathcal{C}$ runs \textsc{Init} to generate system public parameters $\mathtt{PK_{SA_k}}$ for each uncorrupted system authority $\mathtt{SA_k}$. It also generates role public key $\mathbb{PK}^k_{ i}$ for the role $r^k_{ i}$ of all the uncorrupted system authorities. It also generate re-encryption keys for each pair of uncorrupted system authorities. The challenger $\mathcal{C}$ sends the public parameters, i.e., $\mathtt{PK_{SA_k}}$, role public key $\mathbb{PK}^k_{ i}$ and the re-encryption keys to adversary $\mathcal{A}$.
		\item {\bf \textsc{Phase 1}} Adversary $\mathcal{A}$ submits two roles $r^k_{m}$ and $r^{k'}_m$ so that $r^k_{m}\notin \mathbb{A}_{r^k_{i}}$ (i.e., $r^k_{m}\in \bar{\mathbb{A}}_{r^k_{i}}$) and $r^{k'}_{m}\notin \mathbb{A}_{r^{k'}_{j}}$ (i.e., $r^{k'}_{m}\in \bar{\mathbb{A}}_{r^{k'}_{j}}$). It also sends an identity $ID^k_{u}$ to the challenger $\mathcal{C}$. Here the challenged identity can be associated with any of the uncorrupted system authorities. The challenger $\mathcal{C}$ initiates \textsc{PrivKeyGen} and \textsc{RoleKeyGen} algorithms to generate private key, public key and role key and sends them to the adversary $\mathcal{A}$. The adversary $\mathcal{A}$ sends queries for the secret keys to the challenger $\mathcal{C}$ by polynomially many times.
		\item {\bf \textsc{Challenge}}
		When the adversary $\mathcal{A}$ decides that {\bf \textsc{Phase 1}} is completed, he/she submits two equal length messages $\mathtt{K_0}$ and $\mathtt{K_1}$. The challenger $\mathcal{C}$ flips a random coin $b\in \{0, 1\}$ and encrypts the message $\mathtt{K_{b}}$ by initiating \textsc{MultiENC} algorithm.
		\par
		\item {\bf \textsc{Phase 2}} Same as {\bf \textsc{Phase 1}}.
		\item {\bf \textsc{Guess}} Adversary $\mathcal{A}$ outputs a guess $b'$ of $b$. The advantage of the adversary $\mathcal{A}$ to win this game is $Adv_{\mathcal{A}}= |Pr[b'= b]- \frac{1}{2}|$.
	\end{itemize}
	\par
	\begin{definition}
		The proposed scheme is secure against chosen plaintext attack if $Adv_{\mathcal{A}}$ is negligible for any polynomial time adversary $\mathcal{A}$.
	\end{definition}
	\begin{remark}
		Note that, the challenger sends re-encryption keys to the adversary $\mathcal{A}$ in the \textsc{Setup} phase. As such, the adversary $\mathcal{A}$ can re-encrypt the ciphertexts by itself. In addition, we do not consider the \textsc{Enc} oracle, as the adversary $\mathcal{A}$ can get the same response for the queries from the \textsc{MultiEnc} oracle. 
	\end{remark}
	
	\section{Proposed Scheme}
	\label{proposed_scheme}
	In this section, an overview of the proposed scheme is presented followed by its main constructions. In the construction, the Single-Organization Role-Based Encryption (SO-RBE) mechanism is presented first, followed by the Multi-Organization Role-Based Encryption (MO-RBE) mechanism.
	\subsection{Overview}
	Although the existing RBE schemes \cite{Zhou2013, Zhu2013} provide access control over encrypted data by enforcing RBAC access policy, these schemes cannot be applied in multi-organization cloud storage systems. Our main challenge is to construct a RBE scheme for both single and multi-organization cloud storage systems that support efficient decryption and user revocation. 
	\par    
	In the proposed model, the system administrator of an organization maintains a \emph{Role Key Hierarchy} (RKH) associated with a role hierarchy. In a RKH, each role is associated with a role public key and a group of users who hold that role. For each user in the group, a role-manager, who maintains that group, issues a unique role-key. The role-key is computed in such a way that it can decrypt any encrypted data computed using role public keys of any descendant roles. That is, if a user holds a role-key associated with the role $r_j$, the user can decrypt any ciphertext computed using role public key associated with the role $r_i$ if and only if $r_{j}$ belongs to the ancestor set of $r_i$ (i.e., $ r_j\in \mathbb{A}_{r_i}$). For better illustration, Figure \ref{RKH} shows a sample RKH of seven roles. Suppose, a data owner encrypts a message  $\mathtt{M_1}$ using the role public key $\mathtt{PK}^k_{7}$ then any user holding roles $r_1, r_2, r_3, r_4, r_5, r_6, r_7$ can decrypt it using their role-keys, as $\mathbb{A}_{r^k_{7}}= \{r_1, r_2, r_3, r_4, r_5, r_6, r_7\}$. Similarly, if the data owner encrypts another message $\mathtt{M_2}$ using role public key $\mathtt{PK}^k_{6}$ then the ciphertext can be decrypted only by the users who hold roles $r_1, r_3$ and $r_6$, as $\mathbb{A}_{r^k_6}= \{r_1, r_3, r_6\}$.
	\par 
	In order to share data with the users of other organizations, a joint role public key is computed by combining the role public keys of the participated organizations. The joint role public key is then used to encrypt data. The encrypted data is further re-encrypted in such a way that any user who holds a qualified role of a participated organization can decrypt. That is, the role held by the user should belong to an ancestor set of a role associated with the joint role public key. This implies that the same piece of information can be shared with the authorized users irrespective of which organization they belong to.
	
	\par 
	The outsourced decryption is achieved by enabling the public cloud to perform computationally expensive operations during decryption process. To achieve this, the users delegate their role-keys to the public cloud in such a way that the public cloud partially decrypts requested ciphertexts using the delegated role-keys without knowing the actual content of ciphertexts. 
	
	\subsection{Single-Organization RBE (SO-RBE)}
	\label{SO_RBE_construction}
	In this subsection, the proposed SO-RBE mechanism is presented. It consists of the following six phases.
	
	\subsubsection{System Initialization} In this phase, a system administrator, say $\mathtt{SA_k}$, generates system public parameter $\mathtt{PK_{SA_k}}$ and master secret $\mathtt{MSK_{SA_k}}$. The system public parameter $\mathtt{PK_{SA_k}}$ is kept in a public bulletin board while the master secret $\mathtt{MSK_{SA_k}}$ is kept secret. It consists of \textsc{Init} algorithm which is explained next.
	\begin{itemize}
		\item \textsc{Init} $((g^{\delta_k}, \mathtt{PK_{SA_k}}, \mathtt{MSK_{SA_k}})\leftarrow 1^{\Lambda_k})$: $\mathtt{SA_k}$ chooses two multiplicative cyclic groups $\mathbb{G}_1, \mathbb{G}_T$ of a large prime order $q$, a generator $g\in \mathbb{G}_1$, a collision resistant hash function $H_1: \{0, 1\}^w\rightarrow \mathbb{Z}_q^*$ and a bilinear map $\hat{e}: \mathbb{G}_1\times \mathbb{G}_1\rightarrow \mathbb{G}_T$. It also chooses random numbers $y_k, \delta_k, \sigma_k, \eta_k\in \mathbb{Z}_q^*$. It then computes $Y_k= \hat{e}(g, g)^{y_k}, V_k= \hat{e}(g, g)^{\delta_k}$ and $h_k= g^{\eta_k}$. The system public parameter is $\mathtt{PK_{SA_k}}$, where $\mathtt{PK_{SA_k}}= \left<q, \mathbb{G}_1, \mathbb{G}_T, g, \hat{e}, H_1, Y_k, V_k, h_k\right>$ and the master secret is $\mathtt{MSK_{SA_k}}$, where $\mathtt{MSK_{SA_k}}= \left<y_k, \delta_k, \sigma_k, \eta_k\right>$. $\mathtt{SA_k}$ keeps $\sigma_k$ and $\eta_k$ in its private cloud. $\mathtt{SA_k}$ also sends $g^{\delta_k}$ to each role-manager of the organization.
	\end{itemize}

	\subsubsection{Manage Role} In this phase, a system administrator, say $\mathtt{SA_k}$, generates secret role parameter $\mathtt{RP_{SA_k}}$. $\mathtt{SA_k}$ also generates role public key $\mathbb{PK}^k_{i}$ and role secret $\mathtt{RS_{r^k_i}}$ for each role $r^k_i$ in the organization. $\mathtt{SA_k}$ keeps the role public keys $\{\mathbb{PK}^k_{i}\}_{r^k_{ i}\in \Psi_k}$ in its public bulletin board and keeps secret role parameter $\mathtt{RP_{SA_k}}$ in a secure place. It also sends role secrets to the respective role-manager. This phase comprises \textsc{RoleParaGen} algorithm which is described next.
	\begin{itemize}
		\item \textsc{RoleParaGen} $((\mathtt{RP_{SA_k}}, \{\mathbb{PK}^k_{i}\}_{r^k_{ i}\in \Psi_k}, \mathtt{(RS_{r^k_i})_{\forall r^k_{ i}\in \Psi_k}})\leftarrow (\mathtt{PK_{SA_k}}, \mathcal{H}))$: $\mathtt{SA_k}$ chooses random numbers $[t_{k, i}]_{r^k_{ i}\in \Psi_k}\in \mathbb{Z}_q^*$. For each role $r^k_{ i}$ in $\Psi_k$, it computes role public keys $\mathbb{PK}^k_{i}= \left<\mathtt{PK^k_{{i}}}, \{\mathtt{AR^{k}_l} | r^{k}_l \in \mathbb{A}_{r^{k}_i}\}, r^k_{i}\right>$ and role secret $\mathtt{RS_{r^k_i}}$, where  
		\begin{align*}
		\mathtt{PK^k_{{i}}}= & g^{\sum_{r^k_j\in \bar{\mathbb{A}}_{r^k_i}}t_{k, j}} \\
		\mathtt{AR^k_{ l}}= &g^{t_{k, l}}\\
		\mathtt{RS_{r^k_i}}= & \frac{1}{\sum_{ r^k_{j}\in \bar{\mathbb{A}}_{r^k_{i}}}t_{k, j}}
		\end{align*}
		The secret role parameter is $\mathtt{RP_{SA_k}}$, where $\mathtt{RP_{SA_k}}= \left<[t_{k, i}]_{r^k_{ i}\in \Psi_k}\right>$.
	\end{itemize}
\begin{remark}
	With this scheme, it is assumed that the root role in the role hierarchy has more than two children roles to ensure that the role secrets are not disclosed to any other role manager (as with only two children roles each role-manager will know the other's secret role parameter).
\end{remark}
	\subsubsection{Key Generation} In this phase, a system administrator issues a pair of private, public keys and user secrets for each registered users. The role-manager also issues role-keys to the registered user. The issued private and role-keys are sent to the user using secure-channels while the public key is kept in the public bulletin board. On the other hand, system administrator keeps the user secrets in its private cloud so that the role-manager can access the stored user secrets from the private cloud during role-key generation process. This phase consists of two algorithms, namely \textsc{PrivKeyGen} and \textsc{RoleKeyGen} which are given next.
	\begin{itemize}
		\item \textsc{PrivKeyGen} $((\mathtt{US_{ID^k_u}}, u^k_u, \mathtt{Pub_{ID^k_{u}}})\leftarrow (ID^k_{u}, \mathtt{MSK_{SA_k}}, \mathtt{PK_{SA_k}}))$: Suppose a user, say $ID^k_u$, joins the system for the first time. System administrator, say $\mathtt{SA_k}$, chooses a random number $u^k_u\in \mathbb{Z}_q^*$ as a private key and computes the public key $\mathtt{Pub_{ID^k_{u}}}$ and user secret $\mathtt{US_{ID^k_u}}$, where 
		\begin{align*}
		\mathtt{Pub_{ID^k_{u}}}= &g^{\frac{\left[u^k_u+ H_1(ID^k_{u})\right]\delta_k}{\eta_k}}\\
		\mathtt{US_{ID^k_u}}= & (g^{y_k})^{u^k_u}= g^{y_k\cdot u^k_u}
		\end{align*}	
		\item \textsc{RoleKeyGen} $\left(\mathtt{RK^k_{{m, u}}}\leftarrow \left(\mathtt{PK_{SA_k}}, \mathtt{RS_{r^k_m}}, g^{\delta_k}, \mathtt{US_{ID^k_u}}, r^k_{ m}\right)\right)$: When a role-manager $\mathtt{RM_{r^k_m}}$ assigns a role to a user, this algorithm is initiated. Let a user $ID^k_{u}$ holds a role $r^k_{ m}$. The role-manager $\mathtt{RM_{r^k_m}}$ first authenticate the user $ID^k_{u}$ and gets his/her user secret $\mathtt{US_{ID^k_u}}$ from the private cloud. It then issues a role-key $\mathtt{RK^k_{{m, u}}}$, where
		\begin{align*}
		\mathtt{RK^k_{{m, u}}}&= \left(\mathtt{US_{ID^k_u}}\cdot (g^{\delta_k})^{H_1\left(ID^k_{u}\right)}\right)^{\mathtt{RS_{r^k_m}}}\\
		&= \left(g^{y_k\cdot u^k_u}\cdot (g^{\delta_k})^{H_1\left(ID^k_{u}\right)}\right)^{\frac{1}{\sum_{r^k_{j}\in \bar{\mathbb{A}}_{r^k_{ m}}}t_{k, j}}}\\
		&= g^{\frac{\left[y_k\cdot u^k_u+ H_1\left(ID^k_{u}\right)\cdot \delta_k\right]}{\sum_{r^k_{j}\in \bar{\mathbb{A}}_{r^k_{ m}}}t_{k, j}}}
		\end{align*}
		It is to be noted that $g^{\delta_k}$ and $\mathtt{RS_{r^k_m}}$ are known to the role-manager $\mathtt{RM_{r^k_m}}$.
	\end{itemize}

	\subsubsection{Encryption} 
	\label{RBE_encryption}
	In this phase, a data owner encrypts data before outsourcing it to the public cloud. The data owner first encrypts a random secret key using a RBAC access policy and then encrypts the actual data using the secret key. For the secret key encryption part, \textsc{Enc} algorithm is used, which is defined next. While, for the actual data encryption part, any secure symmetric key encryption algorithm, like Advanced Encryption Standards (AES), can be used. Finally, the data owner combines both the encrypted files and outsources it to the public cloud. 
	\begin{itemize}
		\item \textsc{Enc} $(\mathbb{CT}\leftarrow  (\mathtt{PK_{SA_k}, M,} \mathbb{PK}^k_{i}))$: Let a data owner wants to encrypt message $\mathtt{M}\in \mathcal{M}$, where $\mathcal{M}$ is the message space, using role $r^k_{ i}$. First, the message $\mathtt{M}$ is encrypted using a random secret key, say $\mathtt{K}\in \mathbb{G}_T$, and generates a ciphertext $\mathtt{Enc_{K}(M)}$. Afterwards, the secret key $\mathtt{K}$ is encrypted using the role public key $\mathbb{PK}^k_{i}$ of the role $r^k_{ i}$. The encryption procedure is explained below.
		\begin{itemize}
			\item data owner chooses a random number $d\in \mathbb{Z}_q^*$ and computes $C_1, C_2, \{C_{3l}|  \forall r^k_{ l}\in \mathbb{A}_{r^k_i}\}$ and $C_{k, i}$, where 		
			\begin{align*}
			C_1= &\mathtt{K} \left(\frac{Y_k}{V_k}\right)^d\\= &\mathtt{K}\cdot \frac{\hat{e}(g, g)^{y_k\cdot d}}{\hat{e}(g, g)^{\delta_k\cdot d}}\\= &\mathtt{K}\cdot \hat{e}(g, g)^{y_k\cdot d-\delta_k\cdot d}\\
			C_2= &\left(h_k\right)^d= g^{\eta_k\cdot d}\\
			C_{3l}= & (\mathtt{AR^k_l})^d= g^{d\cdot t_{k, l}}\\
			C_{k, i}= &(\mathtt{PK^k_{{i}}})^d= g^{d\sum_{ r^k_{j}\in \bar{\mathbb{A}}_{r^k_{i}}}t_{k, j}}
			\end{align*}
			\item finally, the data owner generates a ciphertext $\mathbb{CT}$, where \begin{align*}
			\mathbb{CT}= &\big<\mathtt{Enc_{K}(M)}, C_1, C_2, \{C_{3l}|  \forall r^k_{ l}\in \mathbb{A}_{r^k_i}\}, C_{k, i}, \\&r^k_{ i}\big>\end{align*}
		\end{itemize}
	\end{itemize}
	\subsubsection{Decryption} 
	\label{Decryption_SR-RBE}	
	In this phase, a user accesses encrypted data by decrypting it using his/her private and role-keys. To take advantage of the outsourced decryption, the user first sends a transformed role-key to the public cloud, and the public cloud partially decrypts the requested ciphertexts using the transformed role-key and the public key of the user. Afterwards, the user decrypts the partially decrypted ciphertexts using his/her private key. This phase comprises \textsc{DEC} algorithm, which is described next.
	\begin{itemize}
		\item \textsc{Dec} $(\mathtt{M}\leftarrow (\mathbb{CT}, \mathtt{RK^k_{x, u}, u^k_u}))$: Suppose a user $ID^k_{u}$, who holds a role $r^k_{ x}$, wants to decrypt a ciphertext $\mathbb{CT}= \left<\mathtt{Enc_{K}(M)}, C_1, C_2, \{C_{3l}|  \forall r^k_{ l}\in \mathbb{A}_{r^k_i}\}, C_{k, i}, r^k_{ i}\right>$, where $r^k_{ x}\in \mathbb{A}_{r^k_i}$. The user $ID^k_{u}$ sends a data access request to the public cloud along with a transformed role-key $\mathtt{TRK^k_{x, u}}$ and $r^k_{ x}$. The user $ID^k_{u}$ chooses a random number $v\in \mathbb{Z}_q^*$ and computes $\mathtt{TRK^k_{x, u}}$, where
		\begin{align*}
		\mathtt{TRK^k_{x, u}}& = (\mathtt{RK^k_{{x, u}}})^{v}= g^{\frac{v[u^k_u\cdot y_k+ H_1(ID^k_{u})\cdot \delta_k]}{\sum_{ r^k_{j}\in \bar{\mathbb{A}}_{r^k_x}}t_{k, j}}}
		\end{align*}
		Afterwards, the user keeps the random number $v$ in a secure database for future use.
		\par 
		The public cloud initiates outsourced decryption process once it received the transformed role-key $\mathtt{TRK^k_{{ x, u}}}$ from the user $ID^k_u$. In the outsourced decryption process, public cloud generates two ciphertext components, namely $P$ and $Q$. Let $T= [u^k_u\cdot y_k+ H_1(ID^k_{u})\cdot \delta_k]$ and $\Gamma(r^k_{ x}, r^k_{ i})$ denotes $\left(\mathbb{D}_{r^k_x}\setminus \mathbb{D}_{r^k_i}\right)$. Public cloud computes $P$ and $Q$ as follows:
		\begin{align*}
		P&= \hat{e}\left(C_{k, i}\prod_{r^k_{ l}\in \Gamma (r^k_{ x}, r^k_{ i})}C_{3l}, \mathtt{TRK^k_{{ x, u}}}\right)\\
		&= \hat{e}\left(g^{d\sum_{ r^k_j\in \bar{\mathbb{A}}_{r^k_i}}t_{k, j}}\prod_{r^k_{ l}\in \Gamma (r^k_{ x}, r^k_{ i})}g^{d\cdot t_{k, l}}, g^{\frac{v\cdot T}{\sum_{ r^k_l\in \bar{\mathbb{A}}_{r^k_x}}t_{k, l}}}\right)\\
		&= \hat{e}\left(g^{d\sum_{r^k_l\in \bar{\mathbb{A}}_{r^k_x}}t_{k, l}}, g^{\frac{v\cdot T}{\sum_{ r^k_l\in \bar{\mathbb{A}}_{r^k_x}}t_{k, l}}}\right)\\
		&= \hat{e}(g, g)^{d\cdot v\cdot T}= \hat{e}(g, g)^{d\cdot v\cdot [u^k_u\cdot y_k+ H_1(ID^k_{u})\cdot \delta_k]}\\
		Q&= \hat{e}\left(C_2, \mathtt{Pub_{ID^k_{u}}}\right)\\
		&= \hat{e}\left(g^{\eta_k\cdot d}, g^{\frac{[u^k_u+ H_1(ID^k_{u})]\delta_k}{\eta_k}}\right)= \hat{e}(g, g)^{d[u^k_u+ H_1(ID^k_{u}) ]\delta_k}	 
		\end{align*} 
		
		Later on, public cloud sends the newly computed ciphertext components $\mathtt{Enc_{K}(M)}$, $C_1$, $P$ and $Q$ to the user. The user $ID^k_{u}$ gets $\mathtt{K}$ using his/her private key $u^k_u$ and random secret $v$ as follows:
		\begin{align*}
		\mathtt{K}= &\frac{C_1}{X},\text{ where}\\
		X= &\left(\frac{P^{\frac{1}{v}}}{Q}\right)^{\frac{1}{u^k_u}}\\
		= &\left(\frac{\hat{e}(g, g)^{\frac{d[u^k_u\cdot y_k+ H_1(ID^k_{u})\cdot \delta_k]v}{v}}}{\hat{e}(g, g)^{d[u^k_u+ H_1(ID^k_{u})]\delta_k}}\right)^{\frac{1}{u^k_u}}\\
		= &\left(\frac{\hat{e}(g, g)^{d[u^k_u\cdot y_k+ H_1(ID^k_{u})\cdot \delta_k]}}{\hat{e}(g, g)^{d[u^k_u+ H_1(ID^k_{u})]\delta_k}}\right)^{\frac{1}{u^k_u}}\\
		= &\left(\hat{e}(g. g)^{d\cdot u^k_u\cdot y_k- d\cdot u^k_u\cdot \delta_k}\right)^{\frac{1}{u^k_u}}= \hat{e}(g. g)^{d\cdot y_k- d\cdot \delta_k}	
		\end{align*}
		Afterwards, the user $ID^k_{u}$ decrypts $\mathtt{Enc_{K}(M)}$ using $\mathtt{K}$ and gets the actual message $\mathtt{M}$. Finally, the user $ID^k_{u}$ deletes the random secret $v$ from his/her database.
	\end{itemize}
	\subsubsection{User Revocation}
	\label{user_revocation}
	System administrator revokes users in this phase. It consists of \textsc{URevoke} algorithm which is described below.
	\begin{itemize}
		\item \textsc{URevoke ($\perp\leftarrow\mathtt{Pub_{ID^k_u}}$)}: Suppose, $\mathtt{SA_k}$ wants to revoke a user, say $ID^k_{u}$. $\mathtt{SA_k}$ simply removes the public key $\mathtt{Pub_{ID^k_{u}}}$ of the user $ID^k_{u}$ from its public bulletin board so that public cloud can no longer use it for the outsourced decryption process. This, in turn, prevents the revoked user from accessing data.
	\end{itemize}
	
	\subsection{Extension to Multi-Organization RBE (MO-RBE)}
	\label{multi-RBE}
	The SO-RBE mechanism can only be used to achieve access control over encrypted data in a single-organization cloud storage system. In this subsection, our MO-RBE mechanism is presented to achieve access control over encrypted data in the multi-organization cloud storage system. The MO-RBE is an extension of the SO-RBE mechanism which requires the following additional phases.

	\subsubsection{Role Public Key Update} In this phase, the system administrator of an organization computes joint role public key when it wants to share data with the users of other organizations. This phase comprises \textsc{RolePubKeyUpdate} algorithm which is described next.
	\begin{itemize}
		\item \textsc{RolePubKeyUpdate} ($\mathbb{PK}^{(k, k')}_{({i}||{j})}\leftarrow (\mathbb{PK}^k_{i}, \mathbb{PK}^{k'}_{j}, \mathtt{PK_{SA_{k'}}})$): Suppose $k^{th}$ organization wants to share data with the users of $k^{'th}$ organization. Let the system administrator $\mathtt{SA_k}$ wants to share data with its users who hold access privilege for the role $r^{k}_i$ and also with the users of $\mathtt{SA_{k'}}$ holding access privilege for the role $r^{k'}_j$. The system administrator $\mathtt{SA_k}$ computes a joint role public key $\mathbb{PK}^{(k, k')}_{({i}||{j})}$ for the roles $r^k_i$ and $r^{k'}_j$, where  
		\begin{align*}
		\mathbb{PK}^{(k, k')}_{({i}||{j})}= &\big<\mathtt{PK^k_{{i}}}, \mathtt{PK^{k'}_{{j}}},  \{\mathtt{AR^{k}_l} | r^{k}_l \in \mathbb{A}_{r^{k}_i}\},  \{\mathtt{AR^{k'}_l} | r^{k'}_l \in \mathbb{A}_{r^{k'}_j}\},\\& r^k_{ i}, r^{k'}_{ j}\big>
		\end{align*}
		System administrator $\mathtt{SA_k}$ knows $\mathtt{PK^{k'}_{j}}$ and $\{\mathtt{AR^{k'}_l} | r^{k'}_l \in \mathbb{A}_{r^{k'}_j}\}$ from the role public key $\mathbb{PK}^{k'}_{j}$ which is available in the public bulletin board of the $k^{'th}$ organization.
	\end{itemize}
	
	\subsubsection{Multi-Organization Encryption}
	\label{multi-encryption}
	In this phase, a data owner of an organization encrypts data using the joint role public keys so that the encrypted data can be accessed by the authorized users of other organizations as well as his/her own organization by decrypting it. This phase consists of \textsc{MultiEnc} algorithm.
	
	\begin{itemize}
		\item \textsc{MultiEnc} $(\mathbb{CT}\leftarrow (\mathtt{PK_{SA_k}}, \mathbb{PK}^{(k, k')}_{({i}||{j})}, \mathtt{M}))$: Let the data owner wants to share a message $\mathtt{M}$ with the users of his/her organization who hold access privilege for the role $r^k_{i}$ and the users of another organization, say $k'$ organization, who hold access privilege for the role $r^{k'}_j$. Like the \emph{Encryption} phase described in Section \ref{RBE_encryption}, the data owner first encrypts data $\mathtt{M}$ using a random secret key $\mathtt{K}\in \mathbb{G}_T$ and the secret key $\mathtt{K}$ is encrypted using the joint role public key $\mathbb{PK}^{(k, k')}_{({i}||{j})}$ of the roles $r^k_{ i}$ and $r^{k'}_{ j}$. Finally, the data owner outsources both the encrypted data to the cloud storage servers. This encryption procedure is done as follows:
		
		\begin{itemize}
			\item the data owner chooses a random number $d\in \mathbb{Z}_q^*$ and computes $C_1, C_2, C'_2$, where 
			\begin{align*}
			C_1= &\mathtt{K} (\frac{Y_k}{V_k})^d= \mathtt{K}\cdot \frac{\hat{e}(g, g)^{y_k\cdot d}}{\hat{e}(g, g)^{\delta_k\cdot d}}= \mathtt{K}\cdot \hat{e}(g, g)^{y_k\cdot d-\delta_k\cdot d}\\
			C_{2}= &(h_k)^d= g^{\eta_{k}\cdot d}\\
			C'_{2}= &(h_{k'})^d= g^{\eta_{k'}\cdot d}
			\end{align*}
			\item afterwards the data owner computes $C_{k, i}, \{C_{3l}| \forall r^k_{ l}\in \mathbb{A}_{r^k_{ i}}\}, C_{k', j}$ and $\{C'_{3l}| \forall r^{k'}_{ l}\in \mathbb{A}_{r^{k'}_{ j}}\}$, where 
			\begin{flalign*}
			C_{k, i}= &(\mathtt{PK^k_{{i}}})^d= g^{d\sum_{ r^k_{ z}\in \bar{\mathbb{A}}_{r^k_{i}}}t_{k, z}}\\
			C_{3l}= & (\mathtt{AR^k_l})^d= g^{d\cdot t_{k, l}}\\
			C_{k', j}= &(\mathtt{PK^{k'}_{{j}}})^d= g^{d\sum_{r^{k'}_{ z}\in \bar{\mathbb{A}}_{r^{k'}_{ j}}}t_{k', z}}\\
			C'_{3l}= & (\mathtt{AR^{k'}_{l}})^d= g^{d\cdot t_{k', l}}
			\end{flalign*}
			\item finally, the data owner generates a ciphertext $\mathbb{CT}$, where 
			\begin{align*}
			\mathbb{CT}= &\big<\mathtt{Enc_{K}(M)}, C_1, C_{2}, C'_{2}, C_{k, i}, \{C_{3l}| \forall r^k_{ l}\in \mathbb{A}_{r^k_{ i}}\},\\& C_{k', j}, \{C'_{3l}| \forall r^{k'}_{ l}\in \mathbb{A}_{r^{k'}_{ j}}\}, r^k_{ i}, r^{k'}_{ j}\big>
			\end{align*}
		\end{itemize}
	\end{itemize}
	
	\subsubsection{System Administrator Agreement}
	\label{agreement}
	In this phase, system administrator of an organization share a secret key, termed as \emph{long term secret}, with system administrators of other organizations so that its users can access data stored by the other organizations. This phase comprises \textsc{LongKeyShare} algorithm which is described next.
	\begin{itemize}
		\item \textsc{LongKeyShare} $((\mathtt{ReKey_{k, k'}}, g^{(y_{k'}- \delta_{k'})\sigma_{k'}})\leftarrow (\mathtt{MSK_{SA_{k'}}}, \mathtt{PK_{SA_{k'}}}, \mathtt{MSK_{SA_{k}}}))$: Let the $k^{'th}$ organization wants to enable its users to access data hosted by $k^{th}$ organization. $\mathtt{SA}_{k'}$ computes the \emph{long term secret} $g^{(y_{k'}- \delta_{k'})\sigma_k'}$ using its master secret $\mathtt{MSK_{SA_{k'}}}$ and sends it to $\mathtt{SA_{k}}$. Afterwards, $\mathtt{SA_{k}}$ computes a proxy re-encryption key $\mathtt{ReKey_{k, k'}}$ and keeps it in its private cloud, where
		\begin{align*}
		\mathtt{ReKey_{k, k'}}&= \frac{g^{(y_{k'}- \delta_{k'})\sigma_{k'}}}{g^{(y_{k}- \delta_k)}}= g^{(y_{k'}- \delta_{k'})\sigma_{k'}- (y_k- \delta_k)}
		\end{align*}
	\end{itemize}
	It is to be noted that this phase needs to be performed only once between two organizations when they agreed to share data.
	
	\subsubsection{Decryption}
	\label{multi-decryption}
	In this phase, a user of an organization access encrypted data stored by another organization. This phase consists of \textsc{MultiDec} algorithm, which is defined as follows.
	\begin{itemize}
		\item \textsc{MultiDec} $(\mathtt{M} \leftarrow (\mathbb{CT},  \mathtt{ReKey_{k, k'}}, \eta_k, \sigma_{k'},\mathtt{RK^{k'}_{x, u}, u^{k'}_u}))$: Suppose a user $ID^{k'}_{u}$ having role $r^{k'}_{ x}$ wants to access encrypted data $\mathbb{CT}= \big<\mathtt{Enc_{K}(M)}, C_1, C_{2}, C'_{2}, C_{k, i}, \{C_{3l}| \forall r^k_{ l}\in \mathbb{A}_{r^k_{ i}}\}, C_{k', j}, \\\{C'_{3l}| \forall r^{k'}_{ l}\in \mathbb{A}_{r^{k'}_{ j}}\}, r^k_{ i}, r^{k'}_{ j}\big>$, where $r^{k'}_{x}\in \mathbb{A}_{r^{k'}_{j}}$. Similar with the \emph{Decrypt} phase in Section \ref{Decryption_SR-RBE}, the user $ID^{k'}_{u}$ sends a data access request to the public cloud along with a transformed role key $\mathtt{TRK^{k'}_{x, u}}$ and $r^{k'}_{ x}$, where
		\begin{align*}
		\mathtt{TRK^{k'}_{x, u}}& = (\mathtt{RK^{k'}_{{x, u}}})^{v}= g^{\frac{v[u^{k'}_u\cdot y_{k'}+ H_1(ID^{k'}_{u})\cdot \delta_{k'}]}{\sum_{r^{k'}_{z}\in \bar{\mathbb{A}}_{r^{k'}_{x}}}t_{k', z}}}
		\end{align*}
		
		where $v\in \mathbb{Z}_q^*$ is a random number which is kept for future use.
		\par 
		After receiving the data access request, public cloud forwards the transformed role key $\mathtt{TRK^{k'}_{x, u}}$ to the private cloud of $\mathtt{SA_{k'}}$, and it also sends the ciphertext components $C_1$ and $C_{2}$ to the private cloud of $\mathtt{SA_k}$. 
		\par 	
		The private cloud of $\mathtt{SA_{k'}}$ computes a temporary decryption key $\mathtt{TDK^{k'}_{x, u}}$ and $\left(\mathtt{Pub_{ID^{k'}_{u}}}\right)^{\sigma_{k'}}$ using $\mathtt{TRK^{k'}_{x, u}}$ and $\sigma_{k'}$, and sends them to the public cloud, where $\mathtt{Pub_{ID^{k'}_{u}}}$ is the public key of the user $ID^{k'}_u$ and 
		\begin{align*}
		\mathtt{TDK^{k'}_{x, u}}=& (\mathtt{TRK^{k'}_{x, u}})^{\sigma_{k'}}= g^{\frac{v\left[u^{k'}_u\cdot y_{k'}+ H_1(ID^{k'}_{u})\cdot \delta_{k'}\right]\sigma_{k'}}{\sum_{r^{k'}_{z}\in \bar{\mathbb{A}}_{r^{k'}_{x}}}t_{k', z}}}
		\end{align*}
		It is to be noted that $\sigma_{k'}$ is known to the private cloud of $\mathtt{SA_{k'}}$. On the other hand, the private cloud of $\mathtt{SA_k}$ translates (i.e., re-encrypts) the ciphertext component from $C_1$ to $C_1'$ using the proxy re-encryption key $\mathtt{ReKey}_{k, k'}$ and $C_{2}$ as follows:
		\begin{align*}
		C_1'&= C_1\cdot X\\
		&=  \mathtt{K}\cdot \hat{e}(g, g)^{y_k\cdot d-\delta_k\cdot d}\cdot X\\
		&= \mathtt{K}\cdot \hat{e}(g, g)^{(y_{k'}-\delta_{k'})\sigma_{k'}\cdot d}, \text{ where}\\
		X&= \hat{e}\left(\mathtt{ReKey_{k, k'}}, (C_{2})^{\frac{1}{\eta_k}}\right)\\
		&= \hat{e}\left(g^{(y_{k'}- \delta_{k'})\sigma_{k'}- (y_k- \delta_k)}, \left(g^{\eta_k\cdot d}\right)^{\frac{1}{\eta_k}}\right)\\
		&= \hat{e}\left(g, g\right)^{(y_{k'}- \delta_{k'})\sigma_{k'}\cdot d- (y_k- \delta_k) d}\\
		&= \hat{e}\left(g, g\right)^{(y_{k'}- \delta_{k'})\sigma_{k'}\cdot d- (y_k\cdot d- \delta_k\cdot d) }
		\end{align*}
		Afterwards, the private cloud $\mathtt{SA_k}$ sends the translated ciphertext component $C_1'$ to the public cloud.
		\par 
		
		After receiving $\mathtt{TDK^{k'}_{x, u}}$, $\left(\mathtt{Pub_{ID^{k'}_{u}}}\right)^{\sigma_{k'}}$ and $C_1'$, the public cloud partially decrypt the ciphertext $\mathbb{CT}'= \{\mathbb{CT}\cup C_1'\}$ using $\mathtt{TDK^{k'}_{x, u}}$ and $\left(\mathtt{Pub_{ID^{k'}_{u}}}\right)^{\sigma_{k'}}$. The public cloud computes $P$ and $Q$ as follows: let $T= [u^{k'}_u\cdot y_{k'}+ H_1(ID^{k'}_{u})\cdot \delta_{k'}]$ and $\Gamma(r^{k'}_{ x}, r^{k'}_{ j})$ denotes $\left(\mathbb{D}_{r^{k'}_x}\setminus \mathbb{D}_{r^{k'}_j}\right)$
		
		\begin{scriptsize}
		\begin{align*}
		P&= \hat{e}\left(C_{k', j}\cdot \prod_{r^{k'}_{ l}\in \Gamma (r^{k'}_{ x}, r^{k'}_{ j})}C'_{3l}, \mathtt{TDK^{k'}_{x, u}}\right)\\
		&= \hat{e}\left(g^{d\sum_{r^{k'}_{ z}\in \bar{\mathbb{A}}_{r^{k'}_{ j}}}t_{k', z}}\cdot \prod_{r^{k'}_{ l}\in \Gamma (r^{k'}_{ x}, r^{k'}_{ j})}g^{d\cdot t_{k', l}}, g^{\frac{v\cdot T\cdot \sigma_{k'}}{\sum_{r^{k'}_{ l}\in \bar{\mathbb{A}}_{r^{k'}_{x}}}t_{k', l}}}\right)\\
		&= \hat{e}\left(g^{d\sum_{r^{k'}_{ l}\in \bar{\mathbb{A}}_{r^{k'}_{ x}}}t_{k', l}}, g^{\frac{v\cdot T\cdot \sigma_{k'}}{\sum_{r^{k'}_{ l}\in \bar{\mathbb{A}}_{r^{k'}_{ x}}}t_{k', l}}}\right)\\
		&= \hat{e}\left(g, g\right)^{d\cdot T\cdot \sigma_{k'}\cdot v}\\
		&= \hat{e}\left(g, g\right)^{\left[u^{k'}_u\cdot y_{k'}+ H_1(ID^{k'}_{u})\cdot \delta_{k'}\right]\sigma_{k'}\cdot v} \text{, (replacing value of $T$)}\\
		Q&= \hat{e}\left(C'_{2}, \left(\mathtt{Pub_{ID^{k'}_{u}}}\right)^{\sigma_{k'}}\right)\\
		&= \hat{e}\left(g^{\eta'_k\cdot d}, g^{\frac{[u^{k'}_u+ H_1(ID^{k'}_{u})]\delta_{k'}\cdot \sigma_{k'}}{\eta_k'}}\right)\\
		&= \hat{e}\left(g, g\right)^{d\left[u^{k'}_u+ H_1(ID^{k'}_{u})\right]\delta_{k'}\cdot \sigma_{k'}}	 
		\end{align*} \end{scriptsize}
	
		The ciphertext components $\mathtt{Enc_{K}(M)}$ and $C_1'$ are now sent to the user $ID^{k'}_{u}$ along with $P$ and $Q$. Finally, user $ID^{k'}_{u}$ gets $\mathtt{K}$ using his/her private key $u^{k'}_u$ and random secret $v$ as follows:
		
		\begin{align*}
		\mathtt{K}= &\frac{C'_1}{R},\text{ where}\\
		R= &\left(\frac{P^{\sfrac{1}{v}}}{Q}\right)^{(\sfrac{1}{u^{k'}_u})}\\
		=&  \left(\frac{\left(\hat{e}(g, g)^{d\cdot\left[u^{k'}_u\cdot y_{k'}+ H_1(ID^{k'}_{u})\cdot \delta_{k'}\right]\sigma_{k'}\cdot v }\right)^{\sfrac{1}{v}}}{\hat{e}(g, g)^{d\left[u^{k'}_u+ H_1(ID^{k'}_{u})\right]\delta_{k'}\cdot \sigma_{k'}}}\right)^{\sfrac{1}{u^{k'}_u}}\\
		= & \left(\frac{\hat{e}(g, g)^{d\cdot\left[u^{k'}_u\cdot y_{k'}+ H_1(ID^{k'}_{u})\cdot \delta_{k'}\right]\sigma_{k'}}}{\hat{e}(g, g)^{\sigma_{k'}\cdot d\left[u^{k'}_u+ H_1(ID^{k'}_{u})\right]\delta_{k'}}}\right)^{\sfrac{1}{u^{k'}_u}}\\
		= & \left(\frac{\hat{e}\left(g, g\right)^{\left[\sigma_{k'}\cdot d\cdot u^{k'}_u\cdot y_{k'}+ \sigma_{k'}\cdot d\cdot H_1(ID^{k'}_{u})\cdot \delta_{k'}\right]}}{\hat{e}(g, g)^{\left[\sigma_{k'}\cdot d\cdot u^{k'}_u \cdot \delta_{k'}+ \sigma_{k'}\cdot d\cdot H_1(ID^{k'}_{u})\cdot \delta_{k'}\right]}}\right)^{\sfrac{1}{u^{k'}_u}}\\
		= & \left(\hat{e}(g, g)^{(y_{k'}- \delta_{k'})\sigma_{k'}\cdot d\cdot u^{k'}_u}\right)^{\sfrac{1}{u^{k'}_u}}\\
		= & \hat{e}\left(g, g\right)^{(y_{k'}- \delta_{k'})\sigma_{k'}\cdot d}
		\end{align*}
		
		Later on, the user $ID^{k'}_{u}$ decrypts $\mathtt{Enc_{K}(M)}$ using $\mathtt{K}$ and gets $\mathtt{M}$. Finally, the user $ID^{k'}_{u}$ removes random secret $v$ from its database.
	\end{itemize}
	\begin{remark}
		In the \emph{Decryption} phase (Section \ref{multi-decryption}) of MO-RBE, any user from the $k^{th}$ organization (i.e., $\mathtt{SA_k}$) holding a qualified role can easily decrypt the ciphertext $\mathbb{CT}$ using the \textsc{DEC} algorithm as described in Section \ref{Decryption_SR-RBE}.
	\end{remark}
	
	\section{Analysis}
	\label{analysis}
	In this section, security and performance analyses of the proposed scheme are presented. The security analysis shows that the proposed scheme is provably secure against Chosen Plaintext Attack (CPA) under the MDBDH assumption; while the performance analysis presents a comprehensive comparison of the proposed scheme with the closely-related works along with the implementation results.
	
	\subsection{Security Analysis}
	\label{security_analsysis}
	The CPA security of the proposed scheme can be defined by the following theorem and its proof. 
	\begin{theorem}
		\label{cpa}
		If a probabilistic polynomial-time (PPT) adversary $\mathcal{A}$ can win the CPA security game (defined in Section \ref{RBE_security_model}) with non-negligible advantage $\epsilon$, then a PPT simulator $\mathcal{B}$ can be constructed to break the MDBDH assumption with non-negligible advantage $\frac{\epsilon}{2}$.
	\end{theorem}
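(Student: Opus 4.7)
The plan is a standard selective-ID reduction to MDBDH. Given an MDBDH instance $(g, A = g^a, B = g^b, C = g^c, Z)$, I would construct a simulator $\mathcal{B}$ that plays the CPA game with $\mathcal{A}$ so that $\mathcal{A}$'s distinguishing advantage for the challenge ciphertext translates into a distinguishing advantage for $Z$. Because the model is selective-ID, $\mathcal{A}$ commits to the challenged roles $r^k_i$ and $r^{k'}_j$ in the \textsc{Initialization} phase before any parameter is released. This commitment is what makes the reduction tractable: the simulator programs the secrets of the challenged organizations so that the blinding factor $\hat{e}(g,g)^{(y_k-\delta_k)d}$ appearing in the challenge's $C_1$ equals $\hat{e}(g,g)^{ab/c}$, exactly the MDBDH target. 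A clean implicit assignment is to pick $\delta_k$ uniformly and set $y_k - \delta_k = a$ (so that $Y_k = \hat{e}(A,g)\cdot V_k$ is computable), inject a factor of $c$ into $\eta_k$ so that $h_k$ is released as a power of $C$, and let the challenge randomness be $d = b/c$ implicitly. The partnered authority $\mathtt{SA}_{k'}$ receives a symmetric embedding so that the ancillary components $C'_2, C_{k',j}, C'_{3l}$ are also computable, and long-term secrets and all re-encryption keys $\mathtt{ReKey}$ are produced from $A^{\sigma_{k'}}$ together with the explicitly chosen secrets of the remaining uncorrupted authorities.

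For the \textsc{Setup} phase, the simulator partitions the role parameters $t_{k,z}$ (and analogously $t_{k',z}$) according to whether $z \in \mathbb{A}_{r^k_i}$ or $z \in \bar{\mathbb{A}}_{r^k_i}$, embedding the factor of $c$ into one part so that every component of the challenge ciphertext $(C_2, C_{k,i}, C_{3l})$ can be computed from $B$ and $C$. Queries are then answered as follows. A \textsc{PrivKeyGen} query returns a freshly sampled $u^k_u$ together with $\mathtt{Pub}_{ID^k_u}$, which is computable because the programmed ratio $\delta_k/\eta_k$ clears the embedded $c$ in the exponent. A \textsc{RoleKeyGen} query on $(ID^k_u, r^k_m)$ with $r^k_m \in \bar{\mathbb{A}}_{r^k_i}$ is answerable via the following combinatorial observation: in a tree-shaped hierarchy, $r^k_m \notin \mathbb{A}_{r^k_i}$ implies that $\bar{\mathbb{A}}_{r^k_m}$ intersects $\mathbb{A}_{r^k_i}$ non-trivially, and this extra freedom in the denominator $\sum_{z\in\bar{\mathbb{A}}_{r^k_m}} t_{k,z}$ is exactly what cancels the otherwise uncomputable $1/c$ against the $a$ sitting in the numerator $y_k u^k_u + H_1(ID^k_u)\delta_k$, leaving an exponent expressible from $A$ and the simulator's known scalars. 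A symmetric argument handles queries against $\mathtt{SA}_{k'}$'s roles, and \textsc{Phase 2} queries are answered the same way as in \textsc{Phase 1}.

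In the \textsc{Challenge} phase, after $\mathcal{A}$ returns $K_0, K_1$, the simulator flips $b^* \in \{0,1\}$ and outputs $\mathbb{CT}$ with $C_1 = K_{b^*}\cdot Z$ and the remaining components computed from the embedded relations. If $Z = \hat{e}(g,g)^{ab/c}$, then $\mathbb{CT}$ is distributed identically to an honest encryption of $K_{b^*}$ under randomness $d = b/c$, so $\Pr[b'=b^*] = \tfrac{1}{2}+\epsilon$; if $Z$ is uniform in $\mathbb{G}_T$, then $K_{b^*}$ is information-theoretically hidden, so $\Pr[b'=b^*] = \tfrac{1}{2}$. Outputting ``real'' iff $b'=b^*$ then yields $\mathcal{B}$'s MDBDH advantage $\epsilon/2$ under the usual two-sided accounting. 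The main obstacle is the joint programming of all $t$-parameters and master secrets so that simultaneously (i) every component of the challenge ciphertext is computable from $\{A,B,C\}$, (ii) every key query against a non-qualifying role from either side of the multi-organization game admits a computable response, and (iii) the simulated distribution is statistically identical to the real game. Showing that a single consistent embedding meets all three constraints under arbitrary tree hierarchies is the bulk of the technical work, and, as is typical in selective-ID proofs for hierarchical encryption, it is handled by a careful ancestor/non-ancestor partition of the secret role parameters aligned with the committed challenged roles.
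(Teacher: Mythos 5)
Your overall skeleton (selective-ID commitment, implicit challenge randomness $d=b/c$, a challenge component $C_1$ equal to $\mathtt{K}_{b}\cdot Z$ up to a computable correction, and the standard two-sided accounting giving advantage $\epsilon/2$) coincides with the paper's reduction. The gap lies in the two devices you propose for the part you yourself defer as ``the bulk of the technical work,'' and both fail. First, the combinatorial claim underlying your key-query simulation is false: $r^k_m\notin\mathbb{A}_{r^k_i}$ does \emph{not} imply $\bar{\mathbb{A}}_{r^k_m}\cap\mathbb{A}_{r^k_i}\neq\emptyset$. Take $r^k_i$ to be the root and $r^k_m$ any strict descendant: this is a legitimate query (a holder of $r^k_m$ cannot decrypt to $r^k_i$ since $r^k_m\notin\mathbb{A}_{r^k_i}=\{r^k_i\}$), yet $\mathbb{A}_{r^k_i}\subseteq\mathbb{A}_{r^k_m}$, so the intersection is empty and the ``extra freedom in the denominator'' you rely on does not exist. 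Second, the ancestor/non-ancestor partition of the $t_{k,z}$ with the factor $c$ embedded ``into one part'' cannot make the whole challenge ciphertext computable: with $d=b/c$ the simulator must output both $C_{3l}=g^{d\,t_{k,l}}$ for every $l\in\mathbb{A}_{r^k_i}$ and $C_{k,i}=g^{d\sum_{z\in\bar{\mathbb{A}}_{r^k_i}}t_{k,z}}$, so the exponents on \emph{both} sides of your partition must carry a factor of $c$; whichever side you leave bare forces the simulator to compute $g^{b/c}$, which it cannot.

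The paper takes a different route precisely here: it does not partition at all, but sets $t_{k,j}=c\,\alpha_{k,j}$ for \emph{every} role (via $T_{r^k_j}=C^{\alpha_{k,j}}$), together with $y_k=a$, $\delta_k=\varrho_k c$ and $\eta_k=\zeta_k c$, so that every challenge component is a power of $B$ and $\mathtt{Pub_{ID^k_{u}}}$ remains computable because the two factors of $c$ cancel in $\delta_k/\eta_k$; role-key queries are then answered uniformly as $\left(A^{\mathfrak{u}_{k, u}}\cdot C^{\varrho_k H_1(ID^k_u)}\right)^{1/\sum\alpha_{k,j}}$ with no case analysis on the queried role, and the challenge blinding term becomes $Z/\hat{e}(B^{\varrho_k},g)$ rather than $Z$ itself. (Note the paper's simulated role key uses the denominator $\sum\alpha_{k,j}$ rather than $c\sum\alpha_{k,j}$, i.e., it sidesteps rather than solves the $g^{a/c}$ obstruction; but that is its route, not yours.) Your alternative programming $y_k-\delta_k=a$ with $\delta_k$ known in the clear is fine for $C_1$, but it does not rescue the key queries: the numerator still contains $a\cdot u^k_u$ while the denominator $\sum_{z\in\bar{\mathbb{A}}_{r^k_m}}t_{k,z}$ is a multiple of $c$, so you face exactly the $g^{a/c}$ obstruction your partition was supposed to remove. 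As written, the proposal therefore does not yield a working simulator.
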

	\begin{proof}
		In this proof, we show that a simulator $\mathcal{B}$ can be constructed which uses an adversary $\mathcal{A}$ to gain advantage $\frac{\epsilon}{2}$ against the proposed scheme.
		\par 
		The MDBDH challenger $\mathcal{C}$ chooses random numbers $a, b, c, z \in \mathbb{Z}_q^*$ and flips a random coin $\mu\in \{0, 1\}$. It sets $Z= \hat{e}(g, g)^{\frac{ab}{c}}$ if $\mu= 0$ and $Z= \hat{e}(g, g)^z$ otherwise. Afterwards, the challenger $\mathcal{C}$ sends $A= g^{a}, B= g^b, C= g^c, Z$ to the simulator $\mathcal{B}$ and asks it to output $\mu$. Now, the simulator $\mathcal{B}$ acts as a challenger in the rest of the security game.
		\par 
		In the game the simulator $\mathcal{B}$ interacts with an adversary $\mathcal{A}$ as follows: 
		\par
		\textbf{\textsc{Initialization}} Let $\mathbb{S}_{\mathtt{SA}}$ and $\mathbb{S}'_{\mathtt{SA}}$ be the set of all system authorities and corrupted system authorities respectively, where $\mathbb{S}'_{\mathtt{SA}}\subset \mathbb{S}_{\mathtt{SA}}$. Adversary $\mathcal{A}$ submits arbitrary role hierarchies $\mathcal{H}^k$ for each $k\in (\mathbb{S}_{\mathtt{SA}}\setminus\mathbb{S}'_{\mathtt{SA}})$ to the simulator $\mathcal{B}$. It also sends two challenged roles $r^k_{ i}$ and $r^{k'}_{j}$ of any two uncorrupted participating organizations, i.e., $(k, k')\in (\mathbb{S}_{\mathtt{SA}}\setminus\mathbb{S}'_{\mathtt{SA}})$. 
		\par 
		\textbf{\textsc{Setup}} Simulator $\mathcal{B}$ chooses random numbers $\{\{\alpha_{k, i}\}_{\forall r^k_i\in \Psi_k}, \varrho_k, \varphi_k, \zeta_k\in \mathbb{Z}_q^*\}$ for each uncorrupted system authorities, i.e., $\forall k\in (\mathbb{S}_{\mathtt{SA}}\setminus\mathbb{S}'_{\mathtt{SA}})$. It sets $\varrho_k'= \varrho_k\cdot c$ and $\zeta_k'= \zeta_k\cdot c$. The simulator $\mathcal{B}$ then computes $Y_k= \hat{e}(A, g)= \hat{e}(g, g)^{a}, V_k= \hat{e}(C, g)^{\varrho_k}= \hat{e}(g, g)^{\varrho'_k}, C^{\varrho_k}= g^{\varrho'_k}$ and a long term secret $\left(\frac{A}{C^{\varrho_k}}\right)^{\varphi_k}= g^{(a- \varrho_k')\varphi_k}$. It also computes $T_{r^k_{ j}}= C^{\alpha_{k, j}}= g^{c\cdot \alpha_{k, j}}$ for all $r^k_{ j}\in \Psi_k$ and $h_k= C^{\zeta_k}= g^{\zeta_k'}$. Afterwards, simulator $\mathcal{B}$ generates the role public key $\mathbb{PK}^k_{i}= \left< \mathtt{PK^k_{{i}}}, \{\mathtt{AR^k_{j}}\}_{\forall  r^k_{ j}\in \mathbb{A}_{r^k_{ i}}}, r^k_{ i}\right>$ for each $k\in (\mathbb{S}_{\mathtt{SA}}\setminus\mathbb{S}'_{\mathtt{SA}})$, i.e., for each uncorrupted system authorities, where
		\begin{align*}
		\mathtt{PK^k_{{i}}}=& \prod_{ r^k_{ j}\in \bar{\mathbb{A}}_{r^k_{i}}}T_{r^k_{ j}}= g^{c\sum_{r^k_{ j}\in \bar{\mathbb{A}}_{r^k_{i}}}\alpha_{k, j}};
		\mathtt{AR^k_{j}}=& T_{k, j}= g^{c\cdot \alpha_{k, j}}
		\end{align*}
		\par 
		Simulator $\mathcal{B}$ sends $\mathtt{PK_{SA_k}}= \left<q, \mathbb{G}_1, \mathbb{G}_T, \hat{e}, g, H_1, Y_k, V_k, h_k\right>$ along with $\mathbb{PK}^k_{i}, g^{\varrho'_k}$ and $g^{a}$ to the adversary $\mathcal{A}$, where $k\in (\mathbb{S}_{\mathtt{SA}}\setminus\mathbb{S}'_{\mathtt{SA}})$. It also sends the re-encryption key $\mathtt{ReKey_{k, k'}}= \frac{g^{(a- \varrho_k')\varphi_k}}{g^{(a- \varrho'_{k'})}}$ to the adversary $\mathcal{A}$ for each pair of $(k, k')\in (\mathbb{S}_{\mathtt{SA}}\setminus\mathbb{S}'_{\mathtt{SA}})$.
		
		\par 
		\textbf{\textsc{Phase 1}} Adversary $\mathcal{A}$ submits two roles $r^k_{ m}$ and $r^{k'}_m$ so that $r^k_{ m}\notin \mathbb{A}_{r^k_{ i}}$ and $r^{k'}_{ m}\notin \mathbb{A}_{r^{k'}_{j}}$ to the simulator $\mathcal{B}$ in a key generation query, where  $r^k_{ m}$ and $r^{k'}_m$ are the roles from two uncorrupted system authorities, i.e., $(k, k')\in(\mathbb{S}_{\mathtt{SA}}\setminus\mathbb{S}'_{\mathtt{SA}})$. Adversary $\mathcal{A}$ also sends an identity $ID^k_{u}$ to the simulator $\mathcal{B}$ for any $k\in (\mathbb{S}_{\mathtt{SA}}\setminus\mathbb{S}'_{\mathtt{SA}})$.	Simulator $\mathcal{B}$ chooses a random number $\mathfrak{u}_{k, u}\in \mathbb{Z}_q^*$ (private key of the adversary). It then computes $\mathtt{US_{ID^k_u}}= A^{\mathfrak{u}_{k, u}}$. Later on, simulator $\mathcal{B}$ computes public key $\mathtt{Pub_{ID^k_{u}}}$ and role-key $\mathtt{RK^k_{{m, u}}}$ as follows:
		\begin{align*}
		\mathtt{Pub_{ID^k_{u}}}&= g^{\frac{[\mathfrak{u}_{k, u}+ H_1(ID_{k, u})]\varrho_k}{\zeta_k}}\\
		&= g^{\frac{[\mathfrak{u}_{k, u}+ H_1(ID_{k, u})]c\cdot \varrho_k}{c\cdot \zeta_k}}= g^{\frac{[\mathfrak{u}_{k, u}+ H_1(ID_{k, u})]\varrho'_k}{\zeta_k'}}\\
		\mathtt{RK^k_{{m, u}}}&= \left(\mathtt{US_{ID^k_u}}\cdot C^{\varrho_k\cdot H_1(ID_{k, u})}\right)^{\frac{1}{\sum_{r^k_{j}\bar{\mathbb{A}}_{r^k_{ m}}}\alpha_{k, j}}}\\
		&= \left(A^{\mathfrak{u}_{k, u}}\cdot C^{\varrho_k\cdot H_1(ID_{k, u})}\right)^{\frac{1}{\sum_{r^k_{j}\bar{\mathbb{A}}_{r^k_{ m}}}\alpha_{k, j}}}\\
		&= g^{\frac{[a\cdot \mathfrak{u}_{k, u}+ c\cdot\varrho_k\cdot H_1(ID_{k, u})]}{\sum_{r^k_{j}\bar{\mathbb{A}}_{r^k_{ m}}}\alpha_{k, j}}}= g^{\frac{[a\cdot \mathfrak{u}_{k, u}+ \varrho'_k\cdot H_1(ID_{k, u})]}{\sum_{r^k_{j}\bar{\mathbb{A}}_{r^k_{ m}}}\alpha_{k, j}}}
		\end{align*}
		\par 
		Finally, the simulator $\mathcal{B}$ sends $\mathtt{Pub_{ID^k_{u}}}$, $\mathtt{RK^k_{{m, u}}}$ and $\mathfrak{u}_{k, u}$ to the adversary $\mathcal{A}$. 
		\par 
		\textbf{\textsc{Challenge}} When the adversary $\mathcal{A}$ decides that \textbf{\textsc{Phase 1}} is over, he/she submits two equal length messages $\mathtt{K_0}$ and $\mathtt{K_1}$ to the simulator $\mathcal{B}$. The simulator $\mathcal{B}$ selects $b\in \{0, 1\}$ at random and generates a challenged ciphertext $\mathbb{CT}_b= \Big<C_b, C_2, \{C_{3l}| \forall r^k_l\in \mathbb{A}_{r^k_i}\}, C_{k, i}, C'_2, \{C'_{3l}| \forall r^{k'}_l\in \mathbb{A}_{r^{k'}_j}\}, C_{k', i}, r^k_i, r^{k'}_j\Big>$, where 
		\begin{align*} 
		C_b&= \mathtt{K_b}\frac{Z}{\hat{e}(B^{\varrho_k}, g)}	\\
		&= \mathtt{K_b}\frac{\hat{e}(g, g)^{\frac{ab}{c}}}{\hat{e}(g, g)^{b\cdot \varrho_k}}\\
		&= \mathtt{K_b}\frac{\hat{e}(g, g)^{a\cdot r'}}{\hat{e}(g, g)^{\varrho_k\cdot c\cdot \frac{b}{c}}} \text{ \Big(let $r'= \frac{b}{c}$\Big)}\\
		&= \mathtt{K_b}\frac{\hat{e}(g, g)^{a\cdot r'}}{\hat{e}(g, g)^{\varrho'_k\cdot r'}}\\
		C_2&= B^{\zeta_k}= g^{\zeta_k\cdot c\cdot\frac{b}{c}}= g^{\zeta_k'\cdot r'}\\
		C_{3l}&= B^{\alpha_{k, l}}= g^{c\cdot \alpha_{k, l}\cdot \frac{b}{c}}= g^{c\cdot \alpha_{k, l}\cdot r'}\\
		C'_2&= B^{\zeta_{k'}}= g^{\zeta_{k'}\cdot c\cdot\frac{b}{c}}= g^{\zeta'_{k'}\cdot r'}\\
		C_{3l}&= B^{\alpha_{k, l}}= g^{c\cdot \alpha_{k, l}\cdot \frac{b}{c}}= g^{c\cdot \alpha_{k, l}\cdot r'}\\
		C'_{3l}&= B^{\alpha_{k', l}}= g^{c\cdot \alpha_{k', l}\cdot \frac{b}{c}}= g^{c\cdot \alpha_{k', l}\cdot r'}\\
		C_{k, i}&= B^{\sum_{r^k_{z}\in \bar{\mathbb{A}}_{r^k_{ i}}}\alpha_{k, z}}\\
		&= g^{b\sum_{r^k_{z}\in \bar{\mathbb{A}}_{r^k_{ i}}}\alpha_{k, z}}\\
		&= g^{\frac{b}{c}\cdot c\sum_{r^k_{z}\in \bar{\mathbb{A}}_{r^k_{ i}}}\alpha_{k, z}}= g^{r'\cdot c\sum_{r^k_{z}\in \bar{\mathbb{A}}_{r^k_{ i}}}\alpha_{k, z} }\\
	C_{k', j}&= B^{\sum_{r^{k'}_{z}\in \bar{\mathbb{A}}_{r^{k'}_{j}}}\alpha_{k', z}}\\
		&= g^{b\sum_{r^{k'}_{z}\in \bar{\mathbb{A}}_{r^{k'}_{j}}}\alpha_{k', z}}\\
		&= g^{\frac{b}{c}\cdot c\sum_{r^{k'}_{z}\in \bar{\mathbb{A}}_{r^{k'}_{j}}}\alpha_{k', z}}= g^{r'\cdot c\sum_{r^{k'}_{z}\in \bar{\mathbb{A}}_{r^{k'}_{j}}}\alpha_{k', z} }
		\end{align*}
		\par 
			\begin{table}[b]
			\tabcolsep 1.0pt
			\centering
			\caption{NOTATIONS}
			\begin{tabular}{p{2.1cm}p{6cm}}
				\hline
				Notation  & Description
				\\[0.5ex]    \hline
				$|\mathbb{G}_1|, |\mathbb{G}_T|$ & size of an element in $\mathbb{G}_1$ and $\mathbb{G}_T$ respectively  \\\hline
				$T_{exp_{\mathbb{G}_1}}$ &  computation cost of one exponentiation operation on an element of $\mathbb{G}_1$\\\hline
				$T_{exp_{\mathbb{G}_T}}$ &  computation cost of one exponentiation operation on an element of $\mathbb{G}_T$\\\hline
				$T_p$ &  computation cost of one pairing operation\\\hline
				$n_c$ & number of roles associated with a ciphertext \\\hline
				$n_t$ & total number of roles in an organization\\\hline
				$n_r$ & total number of ancestor roles\\\hline
				$n_u$ & total number of users associated with a role \\\hline
			\end{tabular}
			\label{notation2}
		\end{table}
		\begin{table*}[t]
			\centering
			\caption{Functionality Comparison}
			\begin{tabular}{|c|c|c|c|c|c|c|c|}
				\hline
				\multirow{2}{*}{} & \multicolumn{2}{c|}{Organization} & \multirow{2}{*}{Private cloud}&
				\multirow{2}{*}{\parbox{2.5cm}{Provide outsourced decryption?}}& \multirow{2}{*}{User revocation} & \multirow{2}{*}{\parbox{2.5cm}{Re-encryption after revocation}} & \multirow{2}{*}{Revocation controller} \\ \cline{2-3}
				& Single & Multi &  & && &  \\ \hline
				\cite{Zhu2013}	& Yes & No &Not required&No& Yes & Required & Data owner \\ \hline
				\cite{Zhou2013}	& Yes & No &Required&Yes& Yes & Not required & System Administrator \\ \hline
				SO-RBE & Yes & No& Required&Yes& Yes & Not required & System Administrator \\ \hline
				MO-RBE & Yes & Yes& Required&Yes& Yes & Not required & System Administrator \\ \hline
			\end{tabular}
			\label{functionality_RBE}
		\end{table*}
		\begin{table*}[t]
			\centering
			\caption{Storage and Computation Costs Comparison }
			\begin{tabular}{|l|p{2.5cm}|l|p{1.7cm}|p{2.3cm}|p{2.7cm}|p{2.5cm}|}
				\hline
				\multirow{2}{*}{} & \multicolumn{3}{c|}{Storage overhead} & \multicolumn{3}{c|}{Computation overhead} \\ \cline{2-7} 
				& \multicolumn{1}{c|}{Ciphertext size} & \multicolumn{1}{c|}{User key size} & \multicolumn{1}{c|}{Master secret size} & \multicolumn{1}{c|}{Encryption} & \multicolumn{1}{c|}{Decryption} & User revocation \\ \hline
				\cite{Zhu2013}	& $(2+ n_c)|\mathbb{G}_1|+ |\mathbb{G}_T|$  & $2|\mathbb{G}_1|+ |\mathbb{Z}_q^*|$  & $(n_t+ 1)|\mathbb{Z}_q^*|+ |\mathbb{G}_1|$ & $(2+ n_c)T_{exp_{\mathbb{G}_1}}+ T_{exp_{\mathbb{G}_T}}$ & $2T_p$ &  $T_{exp_{\mathbb{G}_1}}+ T_{exp_{\mathbb{G}_T}}$\\ \hline
				\cite{Zhou2013}	& $3|\mathbb{G}_1|+ |\mathbb{G}_T|$ & $|\mathbb{G}_1|$ & $3|\mathbb{Z}_q^*|$ & $3T_{exp_{\mathbb{G}_1}}+ T_{exp_{\mathbb{G}_T}}$ & $(n_r+n_u)T_{exp_{\mathbb{G}_1}}+ 2T_{exp_{\mathbb{G}_T}}+ 5T_p$ &  $6T_{exp_{\mathbb{G}_1}}+ T_{exp_{\mathbb{G}_T}}+ 2T_p$\\ \hline
				SO-RBE	& $(1+ n_c)|\mathbb{G}_1|+ |\mathbb{G}_T|$ &  $|\mathbb{G}_1|+ |\mathbb{Z}_q^*|$ & $(4+ n_t)|\mathbb{Z}_q^*|$ & $(1+ n_c)T_{exp_{\mathbb{G}_1}}+ T_{exp_{\mathbb{G}_T}}$ & $T_{exp_{\mathbb{G}_1}}+ 2T_{exp_{\mathbb{G}_T}}+ 2T_p$ & Negligible \\ \hline
				MO-RBE	& $(2+ n_c)|\mathbb{G}_1|+ |\mathbb{G}_T|$ & $|\mathbb{G}_1|+ |\mathbb{Z}_q^*|$ & $(4+ n_t)|\mathbb{Z}_q^*|$ & $(2+ n_c)T_{exp_{\mathbb{G}_1}}+ T_{exp_{\mathbb{G}_T}}$  & $4T_{exp_{\mathbb{G}_1}}+ 2T_{exp_{\mathbb{G}_T}}+ 3T_p$  & Negligible \\ \hline
			\end{tabular}
			\label{storage_computation_RBE}
		\end{table*}
		Simulator $\mathcal{B}$ sends $\mathbb{CT}_b$ to the adversary $\mathcal{A}$. 
		
		\par 
		\textbf{\textsc{- Phase 2}} Same as \textbf{\textsc{Phase 1}}.
		\par
		\textbf{\textsc{- Guess}} The adversary $\mathcal{A}$ guesses a bit $b'$ and sends to the simulator $\mathcal{B}$. If $b'=b$ then the adversary $\mathcal{A}$ wins CPA game; otherwise it fails. If $b'= b$, simulator $\mathcal{B}$ answers ``MDBDH'' in the game (i.e. outputs $\mu= 0$); otherwise $\mathcal{B}$ answers ``random'' (i.e. outputs $\mu= 1$).
		\par 
		If $Z= \hat{e}(g, g)^{z}$; then $C_{b}$ is completely random from the view of the adversary $\mathcal{A}$. So, the received ciphertext $\mathbb{CT}$ is not compliant to the game (i.e. invalid ciphertext). Therefore, the adversary $\mathcal{A}$ chooses $b'$ randomly. Hence, probability of the adversary $\mathcal{A}$ for outputting $b'= b$ is $\frac{1}{2}$.
		\par 
		
		If $Z= \hat{e}(g, g)^{\frac{ab}{c}}$, then adversary $\mathcal{A}$ receives a valid ciphertext. The adversary $\mathcal{A}$ wins the CPA game with non-negligible advantage $\epsilon$ (according to Theorem \ref{cpa}).  So, the probability of outputting $b'= b$ for the adversary $\mathcal{A}$ is $\frac{1}{2}+ \epsilon$, where probability $\epsilon$ is for guessing that the received ciphertext is valid and probability $\frac{1}{2}$ is for guessing whether the valid ciphertext $C_{b}$ is related to $\mathtt{K_0}$ or $\mathtt{K_1}$.
		\par 
		Therefore, the overall advantage $Adv_{\mathcal{A}}$ of the simulator $\mathcal{B}$ is $\frac{1}{2}(\frac{1}{2}+ \epsilon+ \frac{1}{2})- \frac{1}{2}= \frac{\epsilon}{2}$.
	\end{proof}

	\subsection{Performance Analysis}
	\label{performance_analysis}
	The performance analysis of the proposed scheme is presented in two parts, namely comprehensive analysis and implementation results. In the comprehensive analysis, a comparison is made between the proposed scheme (i.e., SO-RBE and MO-RBE mechanisms) and closely-related works Zhou \emph{et al.}'s scheme \cite{Zhou2013} and Zhu \emph{et al.}'s scheme \cite{Zhu2013} in terms of functionalities, storage and computation overhead. For this purpose, the same security level for the computation of cryptographic algorithms is considered in all studied schemes. In the implementation results, the proposed scheme is compared with notable works Zhou \emph{et al.}'s scheme \cite{Zhou2013} and Zhu \emph{et al.}'s scheme \cite{Zhu2013} in terms of computation time. The notations used in the following subsections of this paper are shown in Table \ref{notation2}.
	
	\subsubsection{Comprehensive Analysis}
	Table \ref{functionality_RBE} shows a comparison of the proposed scheme with the closely-related works Zhou \emph{et al.}'s scheme \cite{Zhou2013} and Zhu \emph{et al.}'s scheme \cite{Zhu2013} in terms of some essential functionalities. From the table, it can be observed that the proposed scheme provides access control for both single-organization and multi-organization cloud storage systems; while the other schemes provide access control only for the single-organization cloud storage system. Unlike \cite{Zhu2013}, the proposed scheme (i.e., SO-RBE and MO-RBE) uses private cloud to keep user and role secrets. Moreover, the private cloud in the proposed MO-RBE mechanism is used for providing assistance to the public cloud during outsourced decryption. Also, unlike \cite{Zhu2013}, the proposed SO-RBE and MO-RBE mechanisms support outsourced decryption property which reduces computation overhead on the user's side. In \cite{Zhou2013} and \cite{Zhu2013}, users are revoked by updating all the role public keys and by re-encrypting all the ciphertexts related with the revoked users respectively. This may increase overhead on the system if frequent user revocations occur. In \cite{Zhu2013}, the data owners control user revocation by embedding revoked user identities during encryption process. But, it requires knowledge of the revoked users during the encryption process which may not be feasible always. Moreover, it increases computation overhead on the data owner side if the frequent user revocation occurs. On the other hand, in the proposed scheme, users are revoked from the system by invalidating the public keys of the users. Thus, the proposed scheme takes very less computation overhead to revoke users compared with the other closely-related schemes. 
	\par 
	Table \ref{storage_computation_RBE} shows storage and computation overhead comparison of the proposed scheme (i.e., SO-RBE and MO-RBE mechanisms) with Zhou \emph{et al.}'s scheme \cite{Zhou2013} and Zhu \emph{et al.}'s scheme \cite{Zhu2013}. The comparison is shown in asymptotic upper bound in the worst cases. The storage overhead comparison is done in terms of group element size (i.e., $|\mathbb{G}_1|, |\mathbb{G}_T|$ and $|\mathbb{Z}^*_q|$); while the computation cost comparison is done in terms of number of exponentiation and pairing operations (i.e., $T_{exp_{\mathbb{G}_1}}, T_{exp_{\mathbb{G}_T}}$ and $T_p$). It is to be noted that the computation cost of group element multiplications and hash operations are negligible compared with the pairing and exponentiation operations (the reason is explained in Section \ref{implementation_RBE}). Therefore, the group element multiplications and hash operations are ignored in the rest of the comparisons. For the storage overhead comparison, ciphertext size, user secret key size and master secret key size are considered; while the encryption, decryption and user revocation costs are considered for comparing computation overhead. The encryption cost of the proposed scheme and \cite{Zhu2013} mainly depend on the number of ancestor roles associated with a ciphertext, as both the schemes need to compute one exponentiation operation per ancestor role during encryption process. While, in \cite{Zhou2013}, the encryption process needs constant four exponentiation operations. It is to be noted that encryption operation for a data is performed only once. In \cite{Zhu2013}, the decryption processing cost is constant which takes only two pairing operations, and the pairing operations are performed by the users. In  \cite{Zhou2013}, the decryption processing cost mainly depends on the number of ancestor roles and the number of users associated with a role. On the other hand, in the proposed scheme, a user needs to perform only three exponentiation operations in both SO-RBE and MO-RBE mechanisms during decryption process, and the other cryptographic operations are performed by the public and private clouds. Hence, the decryption processing cost of the proposed scheme is comparable with \cite{Zhu2013}; while it takes considerably less cost compared with \cite{Zhou2013}. In \cite{Zhu2013}, user revocation operation requires two exponentiation operations per ciphertext; while in \cite{Zhou2013}, user revocation operation takes five exponentiation operations and one pairing operation. On the other hand, in the proposed scheme, the users are revoked from the system by removing or invalidating their public keys which takes considerably less computation cost.
	\begin{table}[!t]
		\caption{Computation Time (in Milliseconds) of Elementary Cryptographic Operations}
		\begin{tabular}{|p{.5cm}|l|l|p{.7cm}|l|c|p{.5cm}|}
			\hline
			\multirow{2}{*}{} & \multicolumn{2}{p{1.5cm}|}{Exponentiation} & \multirow{2}{*}{Pairing} & \multicolumn{2}{p{2.4cm}|}{Group multiplication} & \multirow{2}{*}{Hash} \\ \cline{2-3} \cline{5-6} 
			& \multicolumn{1}{c|}{$\mathbb{G}_1$} & \multicolumn{1}{c|}{$\mathbb{G}_T$} &  & \multicolumn{1}{c|}{$\mathbb{G}_1$} & \multicolumn{1}{c|}{$\mathbb{G}_T$} &  \\ \hline
			\multicolumn{1}{|c|}{\begin{tabular}[c]{@{}c@{}}Commodity \\ Laptop\end{tabular}} & $2.062$ & $0.126$ & $1.292$ & $0.008$ & $0.002$ & $0.003$ \\ \hline
			\multicolumn{1}{|c|}{Workstation} & $1.153$ & $0.091$ & $0.645$ & $0.005$ & $0.001$ & $0.002$ \\ \hline
		\end{tabular}
		\label{time_comp}
	\end{table}
	\begin{figure}[t]
		\centering
		\scalebox{3}{\includegraphics[width=3cm, height=2.5cm]{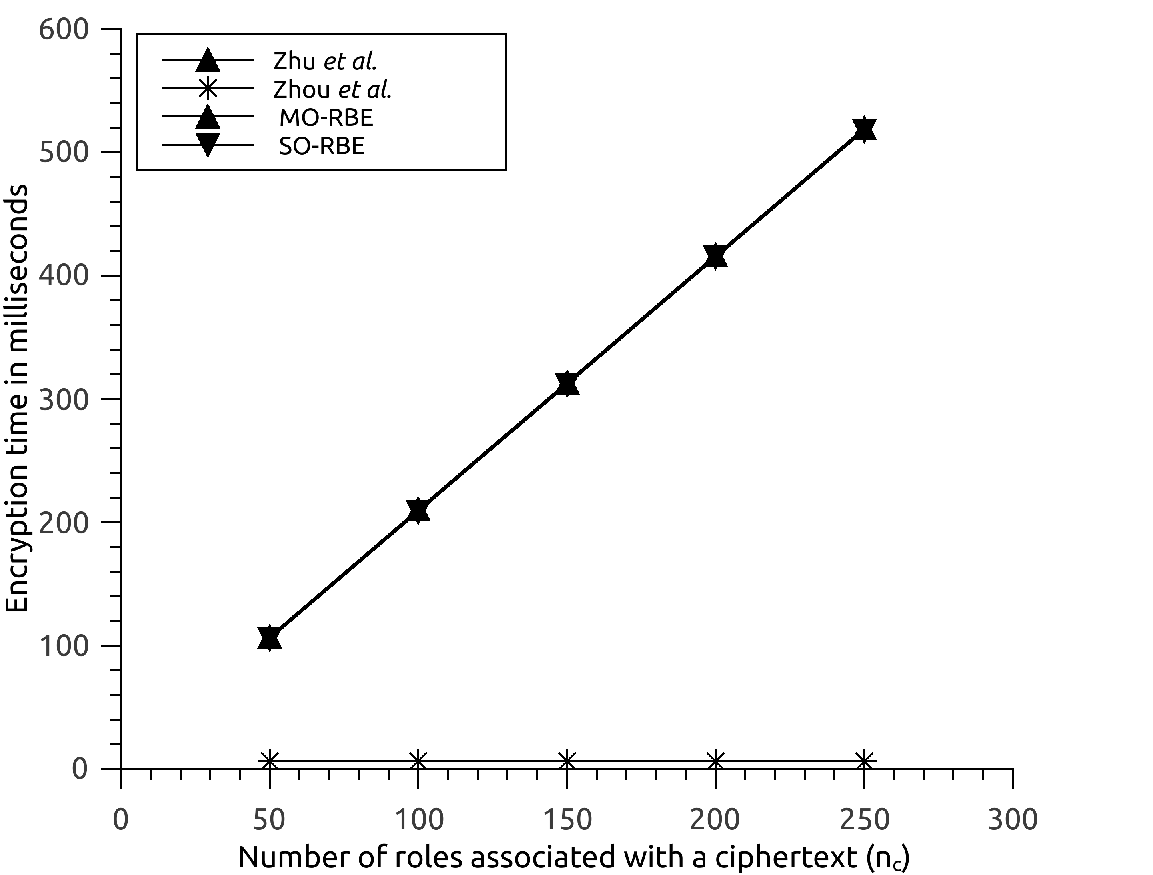}} 
		\caption{Encryption Time Comparison of SO-RBE, MO-RBE with Zhou \emph{et al.}'s scheme \cite{Zhou2013} and Zhu \emph{et al.}'s scheme \cite{Zhu2013}}
		\label{encryption_time_RBE}
	\end{figure}
	\begin{figure}[t]
		\centering
		\scalebox{3}{\includegraphics[width=3cm, height=2.3cm]{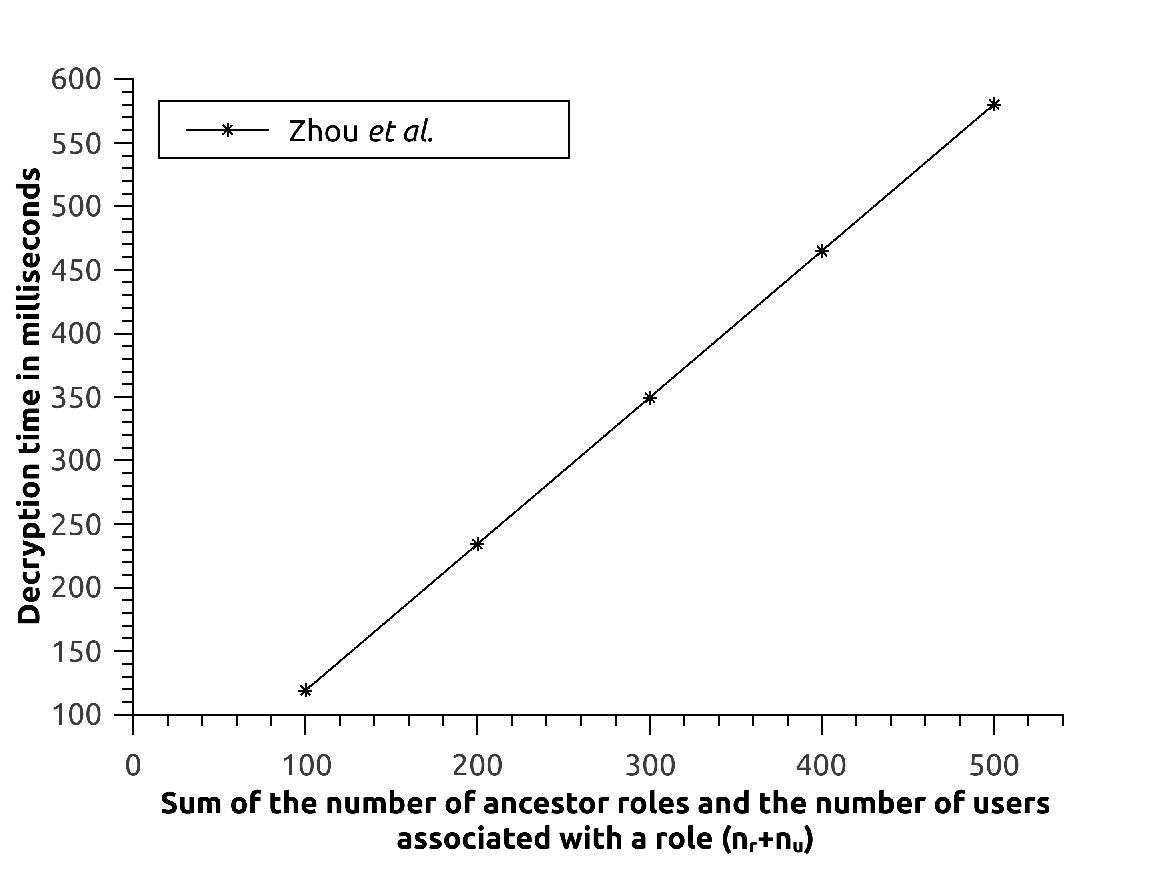}} 
		\caption{Decryption Time in Zhou \emph{et al.}'s scheme \cite{Zhou2013}}
		\label{decryption_time_Zhou}
	\end{figure}
	\subsubsection{Implementation Results}
	\label{implementation_RBE}
	The proposed scheme as well as two closely-related constructions Zhou \emph{et al.}'s scheme \cite{Zhou2013} and Zhu \emph{et al.}'s scheme \cite{Zhu2013} are implemented using Pairing-Based Cryptography (PBC) library \cite{PBC}. The elementary cryptographic operations that are performed by the data owners and users are implemented using a commodity Laptop Computer having Ubuntu 17.10 (64-bit) operating system and having 2.4GHz Core i3 processor with 4GB memory. The elementary cryptographic operations that are performed by public and private clouds are implemented using a workstation having Ubuntu 17.10 (64-bit) operating system and having 3.5 GHz Intel(R) Xeon(R) CPU E5-2637 v4 processor with 16 GB memory. For the implementation purpose, type A elliptic curve with embedding degree $2$ of $160$-bit group order is used, which provides an equivalent $1024$-bit discrete log security. The implementation results of the elementary mathematical functions like exponentiation operations on $\mathbb{G}_1, \mathbb{G}_T$, pairing operation, group element multiplication operations on $\mathbb{G}_1, \mathbb{G}_T$,  and hash operation are shown in Table \ref{time_comp}. From the Table \ref{time_comp}, it is observed that the computation time of group element multiplications on $\mathbb{G}_1, \mathbb{G}_T$ and hash operation are negligible compared with the cryptographic exponentiation and pairing operations. 
	\par
	Figure \ref{encryption_time_RBE} shows the encryption time of the proposed scheme, i.e., SO-RBE and MO-RBE along with Zhou \emph{et al.}'s scheme \cite{Zhou2013} and Zhu \emph{et al.}'s scheme \cite{Zhu2013}. From the figure, it is observed that the encryption time of the proposed scheme and \cite{Zhu2013} linearly increase with the number of roles associated with a ciphertext; while the encryption time in \cite{Zhou2013} takes constant time, as in \cite{Zhou2013} the encryption process takes constant four exponentiation operations. Figure \ref{decryption_time_Zhou} shows decryption time in Zhou \emph{et al.}'s scheme \cite{Zhou2013}. As the decryption time in \cite{Zhou2013} increases with the number of ancestor roles associated with a ciphertext and the number of users associated with a role, the decryption time in \cite{Zhou2013} linearly increases with the number of ancestor roles and number of users associated with a role. On the other hand, the decryption time of the proposed SO-RBE, MO-RBE mechanisms and \cite{Zhu2013} take constant time, approximately $4.60$ milliseconds (ms), $9.675$ ms and $3.58$ ms respectively. Hence, it can be concluded that the decryption time of the proposed scheme take considerably less than of \cite{Zhou2013}, while it is comparable with  \cite{Zhu2013}.
	
	\section{Conclusion}
	\label{conclusion}
	In this paper, a novel cryptographic role-based access control scheme has been proposed for both single and multi-organization cloud storage systems. The proposed scheme employs a Role-Based Encryption (RBE) method to enforce RBAC access policies in encrypted data that enables only the authorized users with qualified roles are able to decrypt. It delegates costly cryptographic operations in the decryption process to the cloud that reduces the overhead on the user side. It also enables the system administrator to revoke a user from the system by revoking/invalidating user's public key with minimal overhead. In addition, we have demonstrated that the proposed scheme is provably secure against CPA under the MDBDH cryptographic assumption. Furthermore, we have shown that the performance of the proposed scheme is better compared with the previous schemes. 
	\par
	Dynamic change of roles in the role hierarchy poses challenges in role-based access control systems. Design of efficient cryptographic role-based access control scheme supporting dynamic changes in roles will be one of the areas of future work. Another area of interest involves the development of trust models for a multi-organization cloud environment. In this context, the model such as the one in \cite{Zhou2015} can be used to extend our proposed scheme to enhance the trust of the data owners on the multi organization system authorities.

	\bibliographystyle{unsrt}

\end{document}